\journal{E}
\newtheorem{theorem}{\color{black}\indent \textbf{Theorem}}[section]
\newtheorem{lemma}{\color{black}\indent Lemma}[section]
\newtheorem{proposition}{\color{black}\indent Proposition}[section]
\newtheorem{definition}{\color{black}\indent Definition}[section]
\newtheorem{remark}{\color{black}\indent Remark}[section]
\newtheorem{corollary}{\color{black}\indent Corollary}[section]
\newtheorem{example}{\color{black}\indent Example}[section]
\begin{document}
\title{Canonical and Canonoid transformations for Hamiltonian systems on locally conformal symplectic manifolds}
\author{
Rafael Azuaje$^{a}$,
Xuefeng Zhao$^{b}$ \\[1ex]
$^{a}$ Departamento de física, Universidad Autónoma Metropolitana unidad Iztapalapa, 09340 Mexico City, Mexico.\\
$^{b}$College of Mathematics, Jilin University, Changchun, 130012, P. R. China \\[1ex]
\texttt{razuaje@xanum.uam.mx}, \texttt{zhaoxuef@jlu.edu.cn}
}
\begin{abstract}
This paper is focused on the development of the notions of canonical and canonoid transformations within the framework of Hamiltonian Mechanics on locally conformal symplectic manifolds. Both, time-independent and time-dependent dynamics are considered. Noether-like theorems relating one-parameter groups of transformations with canonical and noncanonical symmetries, are formulated, proved as well as illustrated with elementary examples.
\end{abstract}
        \begin{keyword}
			Locally conformal symplectic manifolds, Symmetries, Hamiltonian systems, Noether-like theorems.
		\end{keyword}
\date{}
\thanks{*Corresponding author at: razuaje@xanum.uam.mx}
\maketitle

\section{Introduction}

The notion of canonical transformation is well established in Classical Mechanics. It is well known for the research community in the subject, in fact it is presented in classical textbooks \cite{Landau,Calkin,Goldstein} as well as in research papers where it is studied from a geometric modern approach \cite{Asorey1983,cariñena1985canonical,azuaje2025canonical}. On the other hand, the so called canonoid transformations have shown to be as relevant as the canonical ones since they are related to bi-Hamiltonian structures. The concept of canonoid transformation was introduced in \cite{Saletan} and previously studied from a geometric approach in \cite{CR88,cariñena1989,Carinena}. Canonoid transformations have also been studied within the framework of Hamiltonian dynamics on Poisson and contact manifolds \cite{Rastelli2015,azuajecanonical2023}.

The notion of LCS (locally conformal symplectic) manifolds was initially introduced by Lee \cite{Lee}, and subsequently examined by a number of researchers, including Libermann \cite{Libermann}, Lefebvre \cite{Lefebvre}, Gray--Hervella \cite{Gray}, Vaisman \cite{Vaisman}, Eliashberg--Murphy \cite{Eliashberg}, and Chantraine--Murphy \cite{Chantraine}. These manifolds are of particular importance in the study of certain mechanical systems, such as Gaussian isokinetic dynamics and Nosé--Hoover dynamics, as explored by Wojtkowski \cite{Wojtkowski}.

In addition, LCS geometry has deep ties to other mathematical areas, especially in relation to compact complex surfaces. For example, several non-Kähler compact complex surfaces have been shown to admit locally conformal Kähler structures, as demonstrated in the works of Belgun \cite{Belgun}, Verbitsky \cite{Verbitsky}, and others. Moreover, LCS manifolds, in conjunction with contact manifolds \cite{Borman,Bourgeois,Bowden,Pardon}, can be seen as specific instances of transitive Jacobi manifolds \cite{Guedira,Marle} in both even and odd dimensions. Specifically, a LCS manifold of the first kind serves as a particular case of a contact pair, implying the presence of an intrinsic transversely symplectic foliation \cite{Bande}. This characterizes the geometry of such manifolds as a distinct type of transversely symplectic geometry.

Hamiltonian Mechanics on LCS manifolds studies certain systems whose dynamics locally can be modeled by a Hamiltonian system on a symplectic manifold, but it can not be done globally. As stated in \cite{Esen}, some systems with nonlocal potentials, and systems defined by parts, can be described as Hamiltonian systems on LCS manifolds (see \cite{Esen} for a concrete example of a Hamiltonian system on a LCS manifold).

The aim of this paper is to study the notions of canonical and canonoid transformations within the framework of Hamiltonian Mechanics on LCS manifolds. In order, concepts of canonical and canonoid transformations, as extensions of the established notions to the case of Hamiltonian systems on LCS manifolds, are proposed; in addition, it is shown that there exists canonical and noncanonical symmetries related to such transformations, specifically, it is shown that some canonical symmetries with a well defined property, are infinitesimal generators of one-parameter groups of canonical transformations, and the noncanonical symmetries called scaling symmetries, are shown to be infinitesimal generators of one-parameter groups of canonoid transformations. Infinitesimal generator of one-parameter groups of canonoid transformations are completely characterized. It is worth remarking that the development of the notions and ideas include time-independent and time-dependent dynamics. A remarkable result is the establishment of a special class of the canonical symmetries referred to as Noether symmetries, which are related to the so called dissipated quantities, as infinitesimal generators of one-parameter groups of canonical transformations that leave the Hamiltonian invariant (invariance transformations).

The paper is organized as follows. In Section 2, we recall the concept of LCS  manifolds, the time-independent Hamiltonian systems defined on them, and introduce (local) time-dependent Hamiltonian systems. In Section 3, we revisit the canonical transformations for time-independent Hamiltonian systems, define the canonical transformations for time-dependent Hamiltonian systems, and establish the relationship between the local time-dependent Hamiltonian systems and the infinitesimal generators of a family of canonical transformations. In Section 4, we present the Canonoid transformations for both time-independent and time-dependent Hamiltonian systems. We also prove several properties of the scaling symmetries generated by the infinitesimal generators of a family of Canonoid transformations. Furthermore, we introduce a more general class of scaling symmetries and discuss their properties. In Section 5, we define the dissipation quantities for both time-independent and time-dependent Hamiltonian systems. We also define Noether symmetries and Cartan symmetries, and show how to use these symmetries to identify conserved quantities and dissipation quantities.

\section{Hamiltonian mechanics on locally conformal symplectic manifolds}

First we recall the definition of locally conformal symplectic manifolds. Let \( M \) be a smooth manifold. An \emph{almost symplectic form} is a non-degenerate 2-form on \( M \). An almost symplectic form \( \Omega \) is said to be \emph{locally conformal symplectic} if there exists an open covering \(\{U_\lambda\}_{\lambda\in \Lambda}\) of \( M \) and a family of smooth functions \(\{f_\lambda: U_\lambda \to \mathbb{R}\}\) such that \(e^{-f_\lambda} \cdot (\Omega|_{U_\lambda})\) is symplectic for each \(\lambda\).  
	
The non-degeneracy of \( \Omega \) implies that \( df_\lambda = df_\mu \) on \( U_\lambda \cap U_\mu \). Consequently, the forms \(\{df_\lambda\}\) glue together to define a globally defined closed 1-form \(\theta\), which satisfies
	\[
	d\Omega = \theta \wedge \Omega.
	\]
If \(\dim M \geq 4\), the form \(\theta\) is uniquely determined by \(\Omega\) (see \cite{Vaisman}).  
	
Throughout this paper, we denote by \(\mathfrak{X}(M)\) the space of smooth vector fields on \( M \). For \( k \in \mathbb{N} \), we write \(\Omega^k(M)\) for the space of smooth \( k \)-forms on \( M \); in particular, \(\Omega^0(M) = C^\infty(M)\) is the space of smooth functions on \( M \), and we set
	\[
	\Omega^*(M) = \bigcup_{k \in \mathbb{N}} \Omega^k(M).
	\]
	
\begin{definition}
Let \( M \) be a \(2n\)-dimensional differentiable manifold and let \( \Omega \) be a non-degenerate 2-form on \( M \).  
If there exists a closed 1-form \(\theta\) such that
		\[
		d\Omega = \theta \wedge \Omega,
		\]
then the triple \((M,\Omega,\theta)\) is called a \emph{locally conformal symplectic} (LCS) manifold, \(\Omega\) is called a LCS form with Lee form \(\theta\), and the pair \((\Omega,\theta)\) is called a LCS structure.  
		
In particular, if \(\theta\) is exact, then \((M,\Omega,\theta)\) is called a \emph{globally conformal symplectic} (GCS) manifold, \(\Omega\) is called a GCS form with Lee form \(\theta\).
\end{definition}

\subsection{Time-independent dynamics}
We recall the formalism of time-independent Hamiltonian dynamics on LCS manifolds \cite{Vaisman,Esen}.

As described in \cite{Esen}, for any non-degenerate 2-form $\Omega,$ we can define a musical isomorphism
	\begin{equation}\label{Musical}
		\Omega^\flat : \mathfrak{X}(M) \to \Omega^1(M) : X \mapsto X\lrcorner\omega,
	\end{equation}
where $X\lrcorner$ is the interior product. We denote the inverse of isomorphism \eqref{Musical} by $\Omega^\sharp$. When pointwise evaluated, the musical mappings $\Omega^\flat$ and $\Omega^\sharp$ induce isomorphisms from $TM$ to $T^*M$ and from $T^*M$ to $TM$, respectively. 
	\begin{definition}
		A vector field  $X$ on a LCS manifold $(M,\Omega,\theta)$ is a Hamiltonian vector field with Hamiltonian function $f$ if it satisfies 
		$$\Omega^\flat(X)=X\lrcorner\Omega=df-f\theta=:d^\theta f,$$
		where $d^\theta=d-\theta\wedge,$ which is called a Lichnerowicz–deRham differential \cite{Esen}. 

        When the 1-form \( V\lrcorner \Omega \) is closed but not exact with respect to $d^{\theta}$, i.e., the function $f$ is locally defined, the vector field $X$ is called a locally Hamiltonian vector field.
	\end{definition}

\begin{remark}
Time-independent Hamiltonian dynamics on a LCS manifold is defined by a Hamiltonian vector field. Let $(M,\Omega,\theta)$ be a LCS manifold and $H\in C^{\infty}(M)$, the mechanical system defined by $X_{H}$ on $M$ is a Hamiltonian system denoted by $(M,\Omega,\theta,H)$.
\end{remark}

\begin{example}
Consider the manifold
\[
M = T^*Q,
\]
with local coordinates \((q_i,p_i)\), \(i=1,\dots,n\).  
The canonical Liouville 1-form is
\[
\lambda = p_i\,dq_i,\qquad d\lambda = dp_i \wedge dq_i.
\]

Let \(\theta = \theta_i(q)\,dq_i\) be a closed 1-form on the base manifold \(Q\) (pulled back to \(T^*Q\)).  
We can get the LCS   form 
\[
\Omega = d_\theta\lambda
= d\lambda - \theta \wedge \lambda
= dp_i \wedge dq_i - p_i\,\theta \wedge dq_i,
\]
where the term \(\theta\wedge\lambda = \theta\wedge(p_i\,dq_i)\) has been expanded for later use.

Let \(H=H(q,p)\) be a smooth Hamiltonian. The Hamiltonian vector field \(X_H\) is defined by
\[
X_H\lrcorner\,\Omega = d^\theta H
= dH - H\,\theta,
\]
where the twisted differential on 0-forms acts as
\(d^\theta H = dH - H\,\theta.\)
Assume
\[
X_H = A_i(q,p)\,\frac{\partial}{\partial q_i}
+ B_i(q,p)\,\frac{\partial}{\partial p_i}.
\]
We can compute
\[
X_H\lrcorner d\lambda
= X_H\lrcorner(dp_i \wedge dq_i)
= B_i\,dq_i - A_i\,dp_i.
\]
Using \(X\lrcorner(\alpha\wedge\beta)
= (X\lrcorner\alpha)\,\beta - \alpha\wedge(X\lrcorner\beta)\), we find
\[
X_H\lrcorner\theta = \theta_j\,A_j,\qquad
X_H\lrcorner\lambda = p_i\,A_i,
\]
and hence
\[
X_H\lrcorner(\theta\wedge\lambda)
= \theta(X_H)\,\lambda - \theta\wedge(p_i\,A_i)
= \theta(X_H)\,p_k\,dq_k - p_i A_i\,\theta.
\]
Combining the above,
\[
X_H\lrcorner\Omega
= X_H\lrcorner d\lambda - X_H\lrcorner(\theta\wedge\lambda)
= \bigl(B_k - \theta(X_H)\,p_k + p_i A_i\,\theta_k\bigr)\,dq_k
- A_k\,dp_k,
\]
where \(\theta = \theta_k\,dq_k.\)
We have
\[
d^\theta H
= dH - H\,\theta
= \frac{\partial H}{\partial q_k}\,dq_k
+ \frac{\partial H}{\partial p_k}\,dp_k
- H\,\theta_k\,dq_k.
\]
Equating the coefficients of \(dp_k\) and \(dq_k\) in
\(X_H\lrcorner\Omega = d^\theta H,\)
we get
\[
-\,A_k = \frac{\partial H}{\partial p_k}
\quad\Longrightarrow\quad
A_k = -\,\frac{\partial H}{\partial p_k},
\]
and
\[
B_k - \theta(X_H)\,p_k + p_i A_i\,\theta_k
= \frac{\partial H}{\partial q_k} - H\,\theta_k.
\]
Substituting \(\theta(X_H) = \theta_j\,A_j\), we obtain
\[
B_k
= \frac{\partial H}{\partial q_k}
- H\,\theta_k
+ (\theta_j A_j)\,p_k
- (p_i A_i)\,\theta_k.
\]
Finally, using \(A_i = -\,\dfrac{\partial H}{\partial p_i}\),
we arrive at the explicit LCS Hamilton equations:
\begin{align*}
	\begin{cases}
		\text{}	\dot{q}_k &= A_k
		= -\,\frac{\partial H}{\partial p_k},\\
		\text{}\dot{p}_k &= B_k
		= \frac{\partial H}{\partial q_k}
		- H\,\theta_k
		- p_k\,\theta_j\,\frac{\partial H}{\partial p_j}
		+ \theta_k\,p_j\,\frac{\partial H}{\partial p_j}
	\end{cases}
\end{align*}
Here we have used
\[
\theta_j A_j = -\,\theta_j\,\frac{\partial H}{\partial p_j},\qquad
p_i A_i = -\,p_i\,\frac{\partial H}{\partial p_i}.
\]
\end{example}

\subsection{Time-dependent dynamics}
First, we briefly present the fundamentals of time-dependent mechanics following \cite{Abraham2}. 
Let \( Q \) be a smooth manifold. We denote by \( TQ \) and \( T^*Q \) the tangent and cotangent bundles of \( Q \), respectively. 

A map
\[
X : \mathbb{R} \times Q \to TQ, \qquad (t,q) \longmapsto X(t,q)
\]
is called a \emph{time-dependent vector field} if, for each \( t \in \mathbb{R} \), the map
\[
X_t : Q \to TQ, \qquad q \longmapsto X(t,q)
\]
is a vector field on \( Q \). Hence, any time-dependent vector field on a manifold can be regarded as a family of time-independent vector fields \(\{X_t\}_{t \in \mathbb{R}}\) smoothly depending on the parameter \( t \in \mathbb{R} \). The integral curves of $X$ are functions $\varphi:I\subset\mathbb{R}\longrightarrow Q$ such that $\dot{\varphi}(t)=X(t,\varphi(t))$.

Each time-dependent vector field is uniquely associated with a (time-independent) vector field on \( \mathbb{R} \times Q \) defined by
\[
\tilde{X} : \mathbb{R} \times Q \to T(\mathbb{R} \times Q) \simeq T\mathbb{R} \times TQ, 
\qquad (t,q) \longmapsto \bigl((t,1),\, (q, X(t,q))\bigr),
\]
so that
\[
\tilde{X} = \frac{\partial}{\partial t} + X(t,q).
\]
We refer to \(\tilde{X}\) as the \emph{suspension} or \emph{autonomization} of the time-dependent vector field \( X \). The integral curves of $\tilde{X}$ are of the form $\psi(t)=(t,\varphi(t))\in \mathbb{R}\times Q$ with $\varphi(t)\in Q$ and $\varphi$ integral curves of $X$.

Now we consider time-dependent Hamiltonian dynamics on LCS manifolds (see \cite{Chinea,ZSR2023} for different approaches to time-dependent dynamics on LCS manifolds). We consider a LCS manifold $(M,\Omega,\theta)$ and a time-dependent smooth function on $M$, i.e., a function $H\in C^{\infty}(\mathbb{R}\times M)$; following \cite{ZSR2023}, let \( H_t(m) := H(t,m) \), by fixing $t$, we can get the dynamics 
\begin{align}\label{TH2}
X_{H_t}\lrcorner \Omega = d^\theta H_t
\end{align}
on $M$, where \( X_{H_t}\) is a Hamiltonian vector field on \( M \) associated with the Hamiltonian \( H_t \).  The family of vector fields \(\{X_{H_t}\}\) defines a time-dependent vector field
\[ 
X_H : \mathbb{R} \times M \rightarrow TM, \quad (t,m) \longmapsto X_{H_t}(m).
\]
So, the natural phase space for the time-dependent Hamiltonian dynamics determined by $H$ is $\mathbb{R}\times M$ (called the extended phase space \cite{Struckmeier2005,azuaje2025canonical}). We are interested in the dynamics on the extended phase space $\mathbb{R}\times M$. Let \(\hat{\Omega} = \mathrm{pr}^*\Omega\) be the pullback two-form on \(\mathbb{R} \times M\), with \(\hat{\theta} = \mathrm{pr}^*\theta\) being the corresponding one-form on \(\mathbb{R} \times M\), where $ \mathrm{pr}$ is the projection: $\mathrm{pr}: \mathbb{R} \times M \longrightarrow M.$ The vector field $X_{H}$ on $\mathbb{R}\times M$ satisfies $X_{H}\lrcorner dt=0$ and
\begin{equation}
X_{H}\lrcorner \hat{\Omega}=d^{\hat{\theta}}H-\frac{\partial H}{\partial t}dt.
\end{equation}
We state the following definition.
\begin{definition}
A vector field \( X \) on \(  \mathbb{R} \times M \) is a Hamiltonian vector field if 
\begin{equation}
X\lrcorner dt=0 \quad\textit{and}\quad X\lrcorner \hat{\Omega}=d^{\hat{\theta}}f-\frac{\partial f}{\partial t}dt,
\end{equation}
for some function $f\in C^{\infty}(\mathbb{R}\times M)$. $X$ is called the Hamiltonian vector field for $f$ and it is denoted by $X_{f}$. If $f$ is locally defined then $X$ is called a locally Hamiltonian vector field on $\mathbb{R}\times M$.
\end{definition}
\begin{remark}
The Hamiltonian dynamics on $(\mathbb{R}\times M)$ with time-dependent Hamiltonian $H$, is given by the autonomization $\tilde{X}_{H}=X_{H}+\frac{\partial}{\partial t}$ of $X_{H}$, and we call $(\mathbb{R}\times M,\hat{\Omega},\hat{\theta},H)$ a time-dependent Hamiltonian system on the LCS manifold $(M,\Omega,\theta)$.
\end{remark}

\section{Canonical transformations}

In Classical Hamiltonian Mechanics from a geometric approach, coordinate transformations are represented by diffeomorphisms on phase space; those that preserve the Hamiltonian structure up to a proportionality factor, regardless of the specific form of the Hamiltonian function, are referred to as non-strictly canonical transformations. These transformations are characterized by the existence of a real number $\lambda$, called the valence, such that the Poisson bracket of two transformed functions is $\lambda$ times the Poisson bracket of the original functions, after they have been transformed \cite{Carinena,Saletan,Saletan2}. The set of non-strictly canonical transformations forms a group, and the subset of strictly canonical transformations, corresponding to $\lambda=1$, forms a normal subgroup.   Within the framework of symplectic geometry, strictly canonical transformations are represented by symplectomorphisms in phase space, i.e, diffeomorphisms that preserve the symplectic form.
	
Next,  we present the concept of (time-independent) strictly canonical transformation in the framework of LCS manifolds as presented in \cite{Zhao1}. Let $(M,\Omega,\theta)$ be a LCS manifold.
\begin{definition}
A strictly canonical transformation for $(M,\Omega,\theta)$ is a diffeomorphism $\Phi$ on $M$ such that $\Phi^*(\Omega)=\Omega$.
\end{definition}
\begin{remark}
 In \cite{Zhao1} it has been shown that, indeed, (time-independent) strictly canonical transformations preserve the LCS structure, i.e., $\Phi$ is a (time-independent) strictly canonical transformation for the LCS manifold $(M,\Omega,\theta)$ if and only if,
\begin{equation}
\Psi^{*}\Omega=\Omega\quad\textit{and}\quad \Psi^{*}\theta=\theta. 
\end{equation}
\end{remark}
\begin{remark}\label{R3}
As shown in \cite{Zhao1}, for a Hamiltonian vector field \( X_f \), we have
\[
L_{X_f} \Omega = \theta(X_f)\, \Omega.
\]
Therefore, \( X_f \) preserves the form \( \Omega \) if and only if \( \theta(X_f) = 0 \). In this case, the flow \( \Phi_s \) generated by \( X_f \) satisfies
\[
\Phi_s^* \Omega = \Omega,
\]
which means that all the maps \( \Phi_s \) are canonical transformations.
\end{remark}

In what follows we shall refer to  strictly canonical transformations just as canonical transformations. We are interested in studying time-dependent canonical transformations within the framework of LCS manifolds.

We now introduce the following definition of a canonical transformation on the extended phase space determined by a LCS manifold.

\begin{definition}
	Let 
	\(\hat{\Omega} = \mathrm{pr}^*\Omega\) be the pullback two-form on \(\mathbb{R} \times M\),
	with \(\hat{\theta} = \mathrm{pr}^*\theta\) being the corresponding one-form on \(\mathbb{R} \times M\), where $ \mathrm{pr}$ is the projection: $ \mathrm{pr}: \mathbb{R} \times M \longrightarrow M$.
	A smooth map
	\[
	F : \mathbb{R} \times M \to \mathbb{R} \times M
	\]
	is called a \emph{(strict) canonical transformation} (on the extended phase space) if the following conditions are satisfied:
	\begin{enumerate}
		\item[(i)] \(F\) is a diffeomorphism;
		\item[(ii)] \(F\) preserves the time coordinate, i.e., \(F^*t = t\), or equivalently, the following diagram is commutative:
		\[
		\begin{tikzcd}
			\mathbb{R} \times M \arrow[r, "F"] \arrow[d, "\mathrm{pr}_{\mathbb{R}}"'] & 
			\mathbb{R} \times M \arrow[d, "\mathrm{pr}_{\mathbb{R}}"] \\
			\mathbb{R} \arrow[r, equals] & \mathbb{R}
		\end{tikzcd}
		\]
		\item[(iii)] \(F\) satisfies \(F^*\hat{\theta} = \hat{\theta}\);
		\item[(iv)] There exists a function \(K_F \in C^\infty(\mathbb{R} \times M)\) such that
		\[
		F^*\hat{\Omega} = \Omega_{K_F}, \qquad 
		\text{where} \quad \Omega_{K_F} := \hat{\Omega} + d^{\hat{\theta}}K_F \wedge dt.
		\]
	\end{enumerate}
    \end{definition}
    \begin{remark}
       We have that if $F$ is a canonical transformation then, for each ﬁxed value of $t$, $F|_t$ satisfies
       $$(F|_t)^*\Omega=\Omega.$$
    \end{remark}

\begin{remark}
Let us remember that a (local) one-parameter group of canonical transformations is a (local) flow \(\Psi\) (a one-parameter group of transformations \(\{\Psi_s\}\)) on the phase space of a Hamiltonian system such that \(\Psi_s\) is a canonical transformation for each \( s \).
\end{remark}

We present the following result which extends the well known result for Hamiltonian systems on symplectic manifolds that states that infinitesimal generators of one-parameter groups of canonical transformations are Hamiltonian vector fields \cite{spivak2010,azuaje2025canonical}.
\begin{theorem}
Let \(\{\Psi_s\}\) be a (possibly local) one-parameter group of transformations on the extended phase space \((M \times \mathbb{R}\) and \(X \in \mathfrak{X}(M \times \mathbb{R})\) its infinitesimal generator which satisfies $X\lrcorner pr^*\theta=0$. \(\{\Psi_s\}\) is a one-parameter group of canonical transformations if and only if \(X\) is a locally time-dependent Hamiltonian vector field.
\end{theorem}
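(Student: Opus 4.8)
The plan is to strip the statement down to an infinitesimal identity for the twisted differential $d^{\hat\theta}$ and read off the equivalence with the aid of a Poincaré-type lemma. Write $\hat\theta=\mathrm{pr}^*\theta$ and $\hat\Omega=\mathrm{pr}^*\Omega$. First I would dispose of conditions (i)--(iii) in the definition of a canonical transformation on the extended phase space: (i) is automatic since $\{\Psi_s\}$ is a flow; (ii), $\Psi_s^*t=t$, is equivalent to $L_Xt=0$, i.e. to $X\lrcorner dt=0$, which is one half of the meaning of ``(locally) time-dependent Hamiltonian vector field''; and (iii), $\Psi_s^*\hat\theta=\hat\theta$, is automatic under the standing hypothesis, because $d\hat\theta=0$ and $X\lrcorner\hat\theta=0$ give $L_X\hat\theta=d(X\lrcorner\hat\theta)+X\lrcorner d\hat\theta=0$. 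Hence the entire content of the theorem is the equivalence between condition (iv) holding for every $s$ and the equation $X\lrcorner\hat\Omega=d^{\hat\theta}f-\tfrac{\partial f}{\partial t}\,dt$ for some (possibly only locally defined) $f$. The computational engine is the twisted Cartan identity
\[
L_X\hat\Omega=d^{\hat\theta}(X\lrcorner\hat\Omega),
\]
which holds because $d\hat\Omega=\hat\theta\wedge\hat\Omega$ and $X\lrcorner\hat\theta=0$ (expand $L_X\hat\Omega=d(X\lrcorner\hat\Omega)+X\lrcorner(\hat\theta\wedge\hat\Omega)$ and use $X\lrcorner\hat\theta=0$); this is closely related to the computation behind Remark \ref{R3}. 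I would also use that $\partial_t\lrcorner\hat\Omega=0$, $\partial_t\lrcorner\hat\theta=0$, that $d^{\hat\theta}(h\,dt)=d^{\hat\theta}h\wedge dt$ for $h\in C^\infty(\mathbb{R}\times M)$, and the Poincaré lemma for $d^{\hat\theta}$: since locally $\hat\theta=d\varphi$, the operator $d^{\hat\theta}=e^{\varphi}\circ d\circ e^{-\varphi}$ is conjugate to $d$, so every $d^{\hat\theta}$-closed form is locally $d^{\hat\theta}$-exact.

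For the forward implication, assume each $\Psi_s$ is a canonical transformation, so $\Psi_s^*\hat\Omega=\hat\Omega+d^{\hat\theta}K_{\Psi_s}\wedge dt$ with $s\mapsto K_{\Psi_s}$ smooth. Differentiating in $s$ at $0$ and using the engine identity, $d^{\hat\theta}(X\lrcorner\hat\Omega)=L_X\hat\Omega=d^{\hat\theta}(g\,dt)$, where $g$ is the $s$-derivative of $K_{\Psi_s}$ at $s=0$. Thus $X\lrcorner\hat\Omega-g\,dt$ is $d^{\hat\theta}$-closed, hence locally equals $d^{\hat\theta}f$ for some $f$; contracting $X\lrcorner\hat\Omega-g\,dt=d^{\hat\theta}f$ with $\partial_t$ and using $\partial_t\lrcorner\hat\Omega=0=\partial_t\lrcorner\hat\theta$ gives $g=-\tfrac{\partial f}{\partial t}$, so $X\lrcorner\hat\Omega=d^{\hat\theta}f-\tfrac{\partial f}{\partial t}\,dt$. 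Together with $X\lrcorner dt=0$ (from (ii)), this says exactly that $X$ is a locally time-dependent Hamiltonian vector field.

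For the reverse implication, suppose $X\lrcorner dt=0$ and, locally, $X\lrcorner\hat\Omega=d^{\hat\theta}f-\tfrac{\partial f}{\partial t}\,dt$. Then, locally,
\[
L_X\hat\Omega=d^{\hat\theta}(X\lrcorner\hat\Omega)=d^{\hat\theta}\!\big(d^{\hat\theta}f-\tfrac{\partial f}{\partial t}\,dt\big)=-\,d^{\hat\theta}\!\big(\tfrac{\partial f}{\partial t}\big)\wedge dt=d^{\hat\theta}\rho\wedge dt,\qquad\rho:=-\tfrac{\partial f}{\partial t}.
\]
I would then set $K_{\Psi_s}:=\int_0^s(\rho\circ\Psi_\sigma)\,d\sigma$ and verify, using $\Psi_\sigma^*dt=dt$ and that $\Psi_\sigma^*$ commutes with $d^{\hat\theta}$ (since $\Psi_\sigma^*\hat\theta=\hat\theta$, already established), that
\[
\frac{d}{ds}\big(\hat\Omega+d^{\hat\theta}K_{\Psi_s}\wedge dt\big)=d^{\hat\theta}(\rho\circ\Psi_s)\wedge dt=\Psi_s^*\big(d^{\hat\theta}\rho\wedge dt\big)=\Psi_s^*L_X\hat\Omega=\frac{d}{ds}\Psi_s^*\hat\Omega .
\]
Since both sides agree at $s=0$, integration yields $\Psi_s^*\hat\Omega=\hat\Omega+d^{\hat\theta}K_{\Psi_s}\wedge dt$, which is condition (iv); combined with (i)--(iii), each $\Psi_s$ is a canonical transformation.

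I expect the main obstacle to be the $dt$-bookkeeping that forces $L_X\hat\Omega$ --- and hence $\Psi_s^*\hat\Omega-\hat\Omega$ --- into the precise shape $d^{\hat\theta}(\cdot)\wedge dt$ demanded by (iv), rather than into a generic $d^{\hat\theta}$-closed two-form: this is exactly where $X\lrcorner\hat\theta=0$ enters (equivalently, where one uses that the purely spatial part of $\Psi_s^*\hat\Omega-\hat\Omega$ vanishes, i.e. that each slice map $\Psi_s|_t$ is an honest symplectomorphism of $(M,\Omega)$, cf. Remark \ref{R3}). A second, milder point is that when $f$ is only locally defined so are $\rho$ and $K_{\Psi_s}$; they are global precisely when $X$ is a genuine, not merely local, time-dependent Hamiltonian vector field, so one should keep the global-versus-local reading of ``canonical transformation'' in step with the corresponding alternative on the Hamiltonian side.
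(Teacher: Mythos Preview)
Your argument is correct and takes a genuinely different route from the paper's. The paper works slice by slice: it fixes $t$, reduces to the time-independent situation on $(M,\Omega,\theta)$ via Remark~\ref{R3} to get $(\Psi_s|_t)^*\Omega=\Omega$, and then argues that the difference $\Psi_s^*\hat\Omega-\hat\Omega$, being closed and vanishing on every time slice, must be of the form $dJ\wedge dt$ (ordinary Poincar\'e lemma plus the slice-vanishing constraint). Your approach, by contrast, stays entirely on the extended phase space: you isolate the single ``engine'' identity $L_X\hat\Omega=d^{\hat\theta}(X\lrcorner\hat\Omega)$ (valid because $X\lrcorner\hat\theta=0$), and then the whole proof is linear algebra with the twisted Poincar\'e lemma for $d^{\hat\theta}$ and an explicit integral formula $K_{\Psi_s}=\int_0^s(\rho\circ\Psi_\sigma)\,d\sigma$. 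What your approach buys is a cleaner $d^{\hat\theta}$-bookkeeping---in particular you land directly on the required form $d^{\hat\theta}K\wedge dt$ rather than the untwisted $dJ\wedge dt$ that the slice argument produces and that still needs to be reconciled with condition~(iv)---and you make the reverse implication fully explicit where the paper only sketches it. What the paper's approach buys is a conceptual reduction to the already-established time-independent Remark~\ref{R3}, at the cost of the less precise ``closed plus zero-on-slices $\Rightarrow$ $dJ\wedge dt$'' step. Your closing caveat about keeping the local/global reading of $K_{\Psi_s}$ in step with the local/global reading of the Hamiltonian $f$ is well taken and matches the spirit of the paper's own ``(possibly local)'' qualifier.
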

\begin{proof}
First, suppose that \(X\) is a locally time-dependent Hamiltonian vector field. By definition, this means that \(X \lrcorner\, dt = 0\), and in a neighborhood \(U \subset M\) of any point \(p \in M\), there exists a function \(f: U \times \mathbb{R} \to \mathbb{R}\) such that
\[
(X \lrcorner\, \Omega)|_{U \times \mathbb{R}} = d^\theta f - \frac{\partial f}{\partial t} \, dt.
\]
Since \(X \lrcorner\, dt = 0\), for each fixed \(t\), the vector field \(X\) can be regarded as a vector field on \(M\), and satisfies
\[
(X \lrcorner\, \Omega)|_U = d^\theta f|_U, \qquad (X \lrcorner\, \theta)|_U = 0.
\]
This implies that \(X\) is a locally Hamiltonian vector field on \((M, \Omega,\theta)\) satisfying \(X\lrcorner\theta=0\). Therefore, by Remark~\ref{R3}, \(X\) generates a (possibly local) one-parameter group of canonical transformations \(\{\Psi_s\}\) such that
\[
\Psi_s^* \Omega = \Omega.
\]
To extend these transformations to \(M \times \mathbb{R}\), we define \(\Psi_s\) to preserve the temporal parameter \(t\). This yields a one-parameter group of transformations \(\Psi_s\) on \(M \times \mathbb{R}\) such that the two-form \(\Psi_s^* \Omega - \Omega\) is closed and vanishes for each fixed \(t\).

Since \(\Psi_s^* \Omega - \Omega\) is closed, locally there exists a 1-form \(\alpha\) such that
\[
\Psi_s^* \Omega - \Omega = d\alpha.
\]
Moreover, as this difference vanishes when restricted to each time slice \(t = \text{const}\), we must have
\[
\Psi_s^* \Omega - \Omega = d(J \, dt) = dJ \wedge dt,
\]
for some smooth function \(J\). Hence, \(X\) is the infinitesimal generator of a one-parameter group of canonical transformations. The same conclusion also holds globally if \(X\) is a globally time-dependent Hamiltonian vector field.

Conversely, suppose that \(\{\Psi_s\}\) is a one-parameter group of canonical transformations that preserves the temporal parameter \(t\). Then \(X \lrcorner\, dt = 0\), and by Remark~\ref{R3}, it follows that \(X\lrcorner\theta = 0\), and for each fixed \(t\), \(X\) is a Hamiltonian vector field on \((M, \Omega,\theta)\). Therefore, in a neighborhood \(U \subset M\) of any point \(p \in M\), there exists a smooth function \(f_t: U \to \mathbb{R}\) such that
\[
(X \lrcorner\, \Omega)|_U = d^\theta f_t.
\]
Note that
\[
d(X \lrcorner\, \Omega) = L_X \Omega - C \lrcorner\, d\Omega = - X \lrcorner\, (\theta \wedge \Omega) = \theta \wedge (X \lrcorner\, \Omega),
\]
which implies
\[
d(X \lrcorner\, \Omega) - \theta \wedge (X \lrcorner\, \Omega) = d^\theta (X \lrcorner\, \Omega) = 0.
\]
Thus, condition \(d^\theta (X \lrcorner\, \Omega) = 0\) is satisfied.

Now, define a smooth function \(f: U \times \mathbb{R} \to \mathbb{R}\) by \(f(x, t) := f_t(x)\). Then we have
\[
(X \lrcorner\, \Omega)|_{U \times \mathbb{R}} = df - \frac{\partial f}{\partial t} \, dt,
\]
which confirms that \(X\) is a locally time-dependent Hamiltonian vector field.
\end{proof}


\section{Canonoid transformations}

In Classical Mechanics, canonoid transformations are presented (from a geometric perspective) as diffeomorphisms on phase space such that the dynamical Hamiltonian vector field is Hamiltonian for the transformed geometric structure \cite{CR88,Carinena,azuajecanonical2023}. In particular, within the context of symplectic geometry, a canonoid transformations for the Hamiltonian system $(M,\omega,H)$ is a diffeomorphism $\Psi$ on $M$ for which there exists a function $K\in C^{\infty}(M)$ such that 
\begin{equation}
X_{H}\lrcorner\Psi^{*}\omega=dK.
\end{equation}
Now we develop the notion of canonoid transformation for Hamiltonian systems on LCS manifolds. Let $(M,\Omega,\theta)$ be a LCS manifold. Given a diffeomorphism $\Psi$ on $M$ we have that $(\Psi^{*}\Omega,\Psi^{*}\theta)$ defines a LCS structure on $M$, indeed,
\begin{equation}
d\Psi^{*}\Omega=\Psi^{*}(d\Omega)=\Psi^{*}(\theta\wedge\omega)=\Psi^{*}\theta\wedge \Psi^{*}\Omega,
\end{equation}
of course $\Psi^{*}\theta$ is a closed 1-form.

\subsection{The time-independent case}
Let $(M,\Omega,\theta,H)$ be a Hamiltonian system; we propose the following definition.
\begin{definition}
\label{decanonoidLCS}
A canonoid transformation for $(M,\Omega,\theta,H)$ is a diffeomorphism $\Psi$ on $M$ for which there exists a function $K\in C^{\infty}(M)$ such that
\begin{equation}
X_{H}\lrcorner \Psi^{*}\Omega=d^{\Psi^{*}\theta}K
\end{equation}
\end{definition}
Definition \ref{decanonoidLCS} is more general than the one presented in \citep{Zhao1} where it is required that the Lee 1-form $\theta$ is invariant under the transformation.

As for canonical transformations, we can also consider (local) one-parameter groups of canonoid transformations. We have the following result (see \citep{azuaje2024scaling} for analogous results on symplectic and contact manifolds).
\begin{theorem}
\label{theoremcanonoid}
$X\in\mathfrak{X}(M)$ is the infinitesimal generator of a one-parameter group of canonoid transformations for $(M,\Omega,\theta,H)$ if and only if $[X,X_{H}]$ is a Hamiltonian vector field.
\end{theorem}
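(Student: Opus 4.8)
The plan is to work with the (local) flow $\{\Psi_s\}$ of $X$ and to put $\Omega_s:=\Psi_s^*\Omega$, $\theta_s:=\Psi_s^*\theta$; by the observation preceding Definition~\ref{decanonoidLCS} each pair $(\Omega_s,\theta_s)$ is again an LCS structure on $M$, and since $\{\Psi_s\}$ is a one-parameter group one has $\tfrac{d}{ds}\Omega_s=L_X\Omega_s$ and $\tfrac{d}{ds}\theta_s=L_X\theta_s$. Note that ``$\Psi_s$ is a canonoid transformation for $(M,\Omega,\theta,H)$'' means exactly $\iota_{X_H}\Omega_s=d^{\theta_s}K_s$ for some $K_s\in C^\infty(M)$. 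The computations will rely on three elementary identities: the Leibniz rule $L_X(\iota_{X_H}\alpha)=\iota_{[X,X_H]}\alpha+\iota_{X_H}L_X\alpha$; the twisted Leibniz rule $L_X(d^{\theta_s}f)=d^{\theta_s}(Xf)-f\,L_X\theta_s$, immediate from $d^{\theta_s}f=df-f\theta_s$; and the defining relation $\iota_{X_H}\Omega=d^\theta H$.

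For the forward implication, assume each $\Psi_s$ is canonoid and pick $K_s$ with $\iota_{X_H}\Omega_s=d^{\theta_s}K_s$ and $K_0=H$, depending smoothly on $s$ (when $\theta$ is not exact, connectedness of $M$ forces $K_s$ to be unique, hence automatically smooth; in the GCS case one selects the family smoothly). Differentiating the identity at $s=0$ gives $\iota_{X_H}L_X\Omega=d^\theta\dot K-H\,L_X\theta$, with $\dot K:=\tfrac{d}{ds}\big|_{0}K_s$. Independently, the first two identities give $\iota_{[X,X_H]}\Omega=L_X(\iota_{X_H}\Omega)-\iota_{X_H}L_X\Omega=L_X(d^\theta H)-\iota_{X_H}L_X\Omega=d^\theta(XH)-H\,L_X\theta-\iota_{X_H}L_X\Omega$. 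Substituting the previous relation into this yields $\iota_{[X,X_H]}\Omega=d^\theta(XH-\dot K)$, so $[X,X_H]$ is a Hamiltonian vector field, with Hamiltonian function $XH-\dot K$.

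For the converse, suppose $\iota_{[X,X_H]}\Omega=d^\theta g$, and try to construct a smooth family $K_s$ with $K_0=H$ and $\iota_{X_H}\Omega_s=d^{\theta_s}K_s$. With $\Delta_s:=\iota_{X_H}\Omega_s-d^{\theta_s}K_s$, the manipulations of the previous paragraph carried out at a general $s$ give
\[
\frac{d}{ds}\Delta_s=L_X\Delta_s+d^{\theta_s}\!\big(XK_s-\dot K_s\big)-\iota_{[X,X_H]}\Omega_s .
\]
If one can choose, smoothly in $s$ and with $g_0=g$, functions $g_s$ with $\iota_{[X,X_H]}\Omega_s=d^{\theta_s}g_s$, then defining $K_s$ as the solution of the linear transport equation $\dot K_s=XK_s-g_s$, $K_0=H$ (solved by integrating along the orbits of $X$), the right-hand side collapses to $L_X\Delta_s$; since $\Delta_0=\iota_{X_H}\Omega-d^\theta H=0$ and $\tfrac{d}{ds}\big(\Psi_{-s}^*\Delta_s\big)=\Psi_{-s}^*\big(\tfrac{d}{ds}\Delta_s-L_X\Delta_s\big)=0$, we get $\Delta_s\equiv0$, i.e.\ every $\Psi_s$ is canonoid.

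The genuine obstacle is the assumption just used in the converse: one must upgrade ``$[X,X_H]$ is Hamiltonian for $(\Omega,\theta)$'' to ``$[X,X_H]$ is Hamiltonian for the pulled-back structure $(\Omega_s,\theta_s)$ for every $s$'', equivalently that $\iota_{(\Psi_s)_*[X,X_H]}\Omega$ is $d^\theta$-exact with primitives smooth in $s$. I would attack this from the evolution equation $\tfrac{d}{ds}(\iota_{X_H}\Omega_s)=L_X(\iota_{X_H}\Omega_s)-\iota_{[X,X_H]}\Omega_s$, analyzing directly how $\iota_{X_H}\Omega_s$ propagates inside the space of $d^{\theta_s}$-exact $1$-forms, or by a continuity-in-$s$ argument; this is where I expect the real work to lie. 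The remaining steps — the two Leibniz identities, solving the transport equation, and the ODE-uniqueness argument for $\Delta_s$ — are routine.
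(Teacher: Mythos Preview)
Your forward direction coincides with the paper's argument: differentiate the canonoid relation $\iota_{X_H}\varphi_s^*\Omega=d^{\varphi_s^*\theta}K_s$ at $s=0$ to obtain $\iota_{X_H}L_X\Omega=d^\theta\dot K-H\,L_X\theta$, then combine the two Leibniz rules with $\iota_{X_H}\Omega=d^\theta H$ to get $\iota_{[X,X_H]}\Omega=d^\theta(XH-\dot K)$. The paper writes this as $d^\theta(L_XH-K)$ with $K:=\dot K$; the computations are identical.

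For the converse the paper takes a much shorter route than you do: after deriving the infinitesimal relation at $s=0$ it simply \emph{asserts} the biconditional, stating without further justification that $X$ generates a one-parameter group of canonoid transformations if and only if $\iota_{X_H}L_X\Omega=d^\theta K-H\,L_X\theta$ for some $K\in C^\infty(M)$, and then performs the bracket computation. No argument is offered for why the condition at $s=0$ propagates to all $s$. Your transport-equation approach via $\Delta_s$ and the linear ODE $\dot K_s=XK_s-g_s$ is a reasonable strategy for filling this in, and the obstacle you isolate --- needing $\iota_{[X,X_H]}\Omega_s$ to be $d^{\theta_s}$-exact for every $s$, equivalently needing $(\varphi_s)_*[X,X_H]$ to remain Hamiltonian for $(\Omega,\theta)$ --- is genuine: the flow of an arbitrary $X$ has no reason to preserve the set of Hamiltonian vector fields. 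The paper does not address this point either. So you have not introduced a gap that the paper avoids; you have located precisely the step that the paper's proof skips over.
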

\begin{proof}
Let $X\in \mathfrak{X}(M)$ and $\varphi$ its flow. $X$ is the infinitesimal generator of a one-parameter group of canonoid transformations for $(M,\Omega,\theta,H)$ if and only if 
\begin{equation}
\label{eqcanonoids}
X_{H}\lrcorner \varphi_{s}^{*}\Omega = d^{\varphi_{s}^{*}\theta}K_{s},
\end{equation}
for a family $\lbrace K_{s}\rbrace$ of functions on $M$. By taking $\frac{d}{ds}|_{s=0}$ in equation \eqref{eqcanonoids} we have that $X$ is the infinitesimal generator of 
\begin{equation}
X_{H}\lrcorner L_{X}\Omega=d^{\theta}K-HL_{X}\theta,
\end{equation}
where $K=\frac{d}{ds}|_{s=0}K_{s}$ and $K_{s}|_{s=0}=H$. 
So $X$ is the infinitesimal generator of a one-parameter group of canonoid transformations for $(M,\Omega,\theta,H)$ if and only if
\begin{equation}
X_{H}\lrcorner L_{X}\Omega=d^{\theta}K-HL_{X}\theta,
\end{equation}
for some function $K\in C^{\infty}(M)$.

Now let us see
\begin{equation}
\begin{split}
[X,X_{H}]\lrcorner\Omega &= L_{X}(X_{H}\lrcorner\Omega)-X_{H}\lrcorner L_{X}\Omega\\
&= L_{X}(d^{\theta}H)-X_{H}\lrcorner L_{X}\Omega\\
&= d^{\theta}L_{X}H-HL_{X}\theta-X_{H}\lrcorner L_{X}\Omega\\
&= d^{\theta}L_{X}H-HL_{X}\theta-d^{\theta}K+HL_{X}\theta\\
&= d^{\theta}(L_{X}H-K).
\end{split}
\end{equation}
We conclude that $X$ is the infinitesimal generator of a one-parameter group of canonoid transformations for $(M,\Omega,\theta,H)$ if and only if $[X,X_{H}]$ is a Hamiltonian vector field with Hamiltonian function $L_{X}H-K$, with $K$ such that $X_{H}\lrcorner L_{X}\Omega=d^{\theta}K-HL_{X}\theta$.
\end{proof}

\subsection{Scaling symmetries}
The so called scaling symmetries have been shown to be infinitesimal generators of one-parameter groups of (noncanonical) canonoid transformations for Hamiltonian systems on symplectic and contact manifolds \citep{azuaje2024scaling}. Scaling symmetries are special nonstandard symmetries that rescale the dynamics by a constant factor, i.e., they are  a very special case of dynamical similarities \citep{bravetti2023scaling}. Recently, it has been studied the reduction of Hamiltonian dynamics under scaling symmetries \citep{bravetti2024kirillov}. We propose the following definition of scaling symmetry for Hamiltonian systems on LCS manifolds.
\begin{definition}
A vector filed $X\in\mathfrak{X}(M)$ shall be called a scaling symmetry of degree $\Lambda\in\mathbb{R}$  for the Hamiltonian system $(M,\Omega,\theta,H)$ when
\begin{enumerate}
\item[i)] $L_{X}\Omega=\Omega$,
\item [ii)] $L_{X}\theta=0$, and
\item [iii)] $L_{X}H=\Lambda H$.
\end{enumerate}
\end{definition}
In fact, we have that scaling symmetries are nonstandard symmetries, indeed, they are neither, Hamiltonian vector fields nor dynamical symmetries. In fact, scaling symmetries rescale the dynamics by a constant factor as stated in the following proposition (analogous to lemma 42 in \cite{azuaje2024scaling}).
\begin{lemma}\label{L4}
If $X$ is a scaling symmetry of degree $\Lambda\in\mathbb{R}$  for the Hamiltonian system $(M,\Omega,\theta,H)$ then $L_{X}X_{H}=(\Lambda-1)X_{H}$.
\end{lemma}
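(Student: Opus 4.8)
The plan is to compute $L_X X_H$ by exploiting the defining relation $X_H \lrcorner \Omega = d^\theta H$ together with the three scaling conditions, and to do so by applying the musical isomorphism $\Omega^\flat$ and comparing one-forms. First I would write $L_X X_H = [X, X_H]$ and recall the standard identity $[X,X_H]\lrcorner\Omega = L_X(X_H\lrcorner\Omega) - X_H\lrcorner L_X\Omega$, which already appeared in the proof of Theorem~\ref{theoremcanonoid}. Since $X_H\lrcorner\Omega = d^\theta H$, the first term becomes $L_X(d^\theta H) = L_X(dH - H\theta) = d(L_X H) - (L_X H)\theta - H\,L_X\theta$; using conditions (ii) and (iii) this collapses to $d(\Lambda H) - \Lambda H\,\theta = \Lambda\,d^\theta H$. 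For the second term, condition (i) gives $X_H\lrcorner L_X\Omega = X_H\lrcorner\Omega = d^\theta H$. Hence
\[
[X,X_H]\lrcorner\Omega = \Lambda\,d^\theta H - d^\theta H = (\Lambda-1)\,d^\theta H = (\Lambda-1)\,(X_H\lrcorner\Omega).
\]
Since $\Omega$ is non-degenerate, $\Omega^\flat$ is an isomorphism, so we may cancel $\Omega$ and conclude $L_X X_H = [X,X_H] = (\Lambda-1)X_H$, which is exactly the claim.

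The only subtlety — and the step I would be most careful about — is the sign and form of the identity relating $[X,X_H]\lrcorner\Omega$ to Lie derivatives, together with correctly commuting $L_X$ past the twisted differential $d^\theta$. The cleanest way to handle the latter is to note $d^\theta = d - \theta\wedge$ and that $L_X$ commutes with $d$, so $L_X(d^\theta H) = d(L_X H) - (L_X\theta)\wedge H - \theta\wedge(L_X H) = d^\theta(L_X H) - H\,L_X\theta$; then condition (ii) kills the last term. This is a clean, short computation and I do not anticipate a genuine obstacle — the result is essentially forced once the three scaling conditions are fed into the Cartan-calculus identity, and it is the natural LCS analogue of Lemma~42 in \cite{azuaje2024scaling} cited just above the statement. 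I would present the displayed chain of equalities above as the whole proof, inserting one line of explanation for the passage $L_X(d^\theta H) = \Lambda\, d^\theta H$.
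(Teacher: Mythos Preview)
Your proof is correct and follows essentially the same route as the paper: both apply the identity $[X,X_H]\lrcorner\Omega = L_X(X_H\lrcorner\Omega) - X_H\lrcorner L_X\Omega$, compute $L_X(d^\theta H) = d^\theta(L_X H) - H\,L_X\theta = \Lambda\,d^\theta H$ using the scaling conditions, subtract $X_H\lrcorner L_X\Omega = d^\theta H$, and then conclude via the non-degeneracy of $\Omega$ (the paper phrases this last step as $L_X X_H = X_{(\Lambda-1)H} = (\Lambda-1)X_H$, but this is the same argument).
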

\begin{proof}
Let assume that  $X$ is a scaling symmetry of degree $\Lambda$  for $(M,\Omega,\theta,H)$, we have
\begin{equation}
\begin{split}
(L_{X}X_{H})\lrcorner\Omega=[X,X_{H}]\lrcorner\Omega&=L_{X}(X_{H}\lrcorner \Omega)-X_{H}\lrcorner L_{X}\Omega\\
&= L_{X}(d^{\theta}H)-X_{H}\lrcorner\Omega\\
&=d^{\theta}(L_{X}H)-HL_{X}\theta-d^{\theta}H\\
&= d^{\theta}(\Lambda H)-d^{\theta}H\\
&= d^{\theta}((\Lambda-1)H);
\end{split}
\end{equation}
so $L_{X}X_{H}$ is a Hamiltonian vector field with Hamiltonian function $(\Lambda-1)H$, i.e.,
\begin{equation}
L_{X}X_{H}=X_{(\Lambda-1)H}=(\Lambda-1)X_{H}.
\end{equation}
\end{proof}

From the previous lemma we have that for every scaling symmetry $X$, $[X,X_{H}]$ is a Hamiltonian vector field, so from theorem \ref{theoremcanonoid}, which characterizes infinitesimal generators of one-parameter groups of canonoid transformations as vector fields $X$ such that $[X,X_{H}]$ is a Hamiltonian vector field, where $H$ is the Hamiltonian of the system, we have the following result.
\begin{corollary}
Every scaling symmetry is the infinitesimal generator of a one-parameter group of noncanonical canonoid transformations.
\end{corollary}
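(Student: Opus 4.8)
The plan is to derive the canonoid property from Lemma~\ref{L4} together with Theorem~\ref{theoremcanonoid}, and then to check by a direct computation that the flow generated by a scaling symmetry cannot preserve $\Omega$, which is precisely what forces the associated canonoid transformations to be noncanonical.

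First I would take $X$ to be a scaling symmetry of degree $\Lambda\in\mathbb{R}$ for $(M,\Omega,\theta,H)$. By Lemma~\ref{L4} we have $[X,X_{H}]=L_{X}X_{H}=(\Lambda-1)X_{H}=X_{(\Lambda-1)H}$, so $[X,X_{H}]$ is a Hamiltonian vector field. Theorem~\ref{theoremcanonoid} then applies directly and yields that $X$ is the infinitesimal generator of a (possibly local) one-parameter group $\{\Psi_{s}\}$ of canonoid transformations for $(M,\Omega,\theta,H)$.

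It remains to show that, for $s\neq 0$, the diffeomorphism $\Psi_{s}$ is not a canonical transformation, i.e.\ $\Psi_{s}^{*}\Omega\neq\Omega$. For this I would differentiate $s\mapsto\Psi_{s}^{*}\Omega$ along the flow, using the first defining property of a scaling symmetry, $L_{X}\Omega=\Omega$:
\[
\frac{d}{ds}\,\Psi_{s}^{*}\Omega=\Psi_{s}^{*}(L_{X}\Omega)=\Psi_{s}^{*}\Omega ,
\]
together with $\Psi_{0}=\mathrm{id}$. Hence $\frac{d}{ds}\bigl(e^{-s}\Psi_{s}^{*}\Omega\bigr)=0$, so $e^{-s}\Psi_{s}^{*}\Omega$ is constant and equal to its value $\Omega$ at $s=0$; that is, $\Psi_{s}^{*}\Omega=e^{s}\,\Omega$. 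Since $\Omega$ is non-degenerate it is nowhere zero, so $e^{s}\Omega=\Omega$ can hold only when $e^{s}=1$, i.e.\ $s=0$. Therefore every $\Psi_{s}$ with $s\neq 0$ fails $\Psi_{s}^{*}\Omega=\Omega$ and is a canonoid but noncanonical transformation, which proves the corollary. (One may also note that $L_{X}\theta=0$ gives $\Psi_{s}^{*}\theta=\theta$, consistent with these maps being canonoid for the LCS structure.)

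I do not expect a genuine obstacle here; the only point deserving a line of justification is the identity $\Psi_{s}^{*}\Omega=e^{s}\Omega$, which is the unique solution of the linear relation displayed above (pointwise it is an ordinary linear ODE in the finite-dimensional space $\Lambda^{2}T_{m}^{*}M$), or equivalently follows from the vanishing of the $s$-derivative of $e^{-s}\Psi_{s}^{*}\Omega$.
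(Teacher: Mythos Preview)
Your argument is correct. The canonoid part is exactly the paper's route: Lemma~\ref{L4} gives that $[X,X_{H}]$ is Hamiltonian, and Theorem~\ref{theoremcanonoid} then yields the canonoid conclusion.

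For the noncanonical part you go further than the paper. The paper does not carry out a proof; it simply records in a remark that the transformations are noncanonical ``since every scaling symmetry is a nonhamiltonian vector field,'' implicitly relying on Remark~\ref{R3} (a Hamiltonian vector field preserves $\Omega$ iff $\theta$ vanishes on it, whereas here $L_{X}\Omega=\Omega\neq 0$). Your direct computation $\Psi_{s}^{*}\Omega=e^{s}\Omega$ is a cleaner and self-contained way to see this: it exhibits explicitly how the flow rescales the LCS form and makes the failure of $\Psi_{s}^{*}\Omega=\Omega$ for $s\neq 0$ evident without appealing to any characterization of Hamiltonian vector fields. Both arguments ultimately rest on the defining condition $L_{X}\Omega=\Omega$, so the difference is one of explicitness rather than of method.
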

\begin{remark}
It is worth remarking that canonoid transformations generated by a scaling symmetry are noncanonical transformations since every scaling symmetry is a nonhamiltonian vector field.
\end{remark}

\begin{example}
Let us consider the LCS manifold $M=\mathbb{R}^{4}-\lbrace(q_1,q_2,p_1,p_2):q_1=0,q_2=0\rbrace$ with LCS structure $(\Omega,\theta)$ locally defined by 
\begin{equation}
\Omega=e^{-lnp_1}(dq_{1}\wedge dp_{1}+dq_2\wedge dp_2),\quad \theta=d(-lnp_1)=-\frac{1}{p_1}dp_1,\quad \forall (q_1,q_2,p_1,p_2):p_1>0;
\end{equation}
indeed,
\begin{equation}
d\Omega=-e^{-lnp_1}\frac{1}{p_1}dp_1\wedge dq_{2}\wedge dp_2=\theta\wedge\Omega.
\end{equation}
Now let us consider the Hamiltonian system on $(M,\Omega,\theta)$ with (natural) Hamiltonian function
\begin{equation}
H(q_1,q_2,p_1,p_2)=\frac{1}{2}(p_1^{2}+p_2^{2})-\frac{1}{q_1}-\frac{1}{q_2}.
\end{equation}
We have that 
\begin{equation}
X=q_1\frac{\partial}{\partial q_1}+q_2\frac{\partial}{\partial q_2}-\frac{p_1}{2}\frac{\partial}{\partial p_1}-\frac{p_2}{2}\frac{\partial}{\partial p_2}
\end{equation}
is a scaling symmetry of $(M,\Omega,\theta,H)$ of degree $-1$. Indeed,
\begin{equation}
\begin{split}
L_{X}\Omega &=d(X\lrcorner\Omega)+X\lrcorner d\Omega\\
&=e^{-lnp_1}\left(dq_1\wedge dp_1+\frac{1}{2}dq_2\wedge dp_2+\frac{p_2}{2p_1}dq_2\wedge dp_1-\frac{q_2}{p_1}dp_1\wedge dp_2\right)\\
&+e^{-lnp_1}\left(\frac{1}{2}dq_2\wedge dp_2-\frac{p_2}{2p_1}dq_2\wedge dp_1+\frac{q_2}{p_1}dp_1\wedge dp_2\right)\\
&=e^{-lnp_1}(dq_{1}\wedge dp_{1}+dq_2\wedge dp_2)\\
&=\Omega,
\end{split}
\end{equation}
\begin{equation}
\begin{split}
L_{X}\theta=L_{X}d({-lnp_1})=-d(L_{X}lnp_1)=-d(-\frac{1}{2})=0,
\end{split}
\end{equation}
and
\begin{equation}
L_{X}H=-\frac{1}{2}p_1^{2}-\frac{1}{2}p_2^{2}\textcolor{red}{+}\frac{1}{q_1}\textcolor{red}{+}\frac{1}{q_2}=-H.
\end{equation}
\end{example}

\begin{remark}
A scaling symmetry need not be unique. In fact, for any two scaling symmetries \( X, X' \) of a Hamiltonian \( H \) satisfying $$1-\theta(X)\neq 0,\;\mathrm{or}\; 1-\theta(X')\neq 0,$$ for simplicity, assume that $1-\theta(X)\neq 0,$ one has
\[
X'\lrcorner \Omega =\frac{1-\theta(X')}{1-\theta(X)} X\lrcorner \Omega +\alpha 
\]
where $\alpha$ is a $d^\theta$-closed 1-form satisfying $$X_H\lrcorner\alpha=\left(\frac{1-\theta(X')}{1-\theta(X)}\Lambda - \Lambda'+\theta(X')-\frac{1-\theta(X')}{1-\theta(X)}\theta(X)\right) H,$$
and \( \Lambda', \Lambda \) are the degrees of \( X', X \), respectively. Conversely, any closed 1-form \( \alpha \) with \( X_H\lrcorner \alpha = (\Lambda' - \Lambda) H \) determines another scaling symmetry \( X' \) of degree \( \frac{1-\theta(X')}{1-\theta(X)}\Lambda -\frac{1-\theta(X')}{1-\theta(X)}\theta(X) -\Lambda' +\Lambda+\theta(X')\) via
\[ X'\lrcorner \Omega =\frac{(1-\theta(X'))X\lrcorner \Omega}{1-\theta (X)}  + \alpha.
\]
\end{remark}
\begin{proof}
Let \( X \) and \( X' \) be two scaling symmetries of \( H \) with degrees \( \Lambda \) and \( \Lambda' \), respectively. Then we have
\[
L_X \Omega = \Omega,\quad L_X\theta =0, \quad L_X H = \Lambda H, \qquad 
L_{X'} \Omega = \Omega, \quad L_{X'}\theta=0, \quad L_{X'} H = \Lambda' H.
\]
Since \( \Omega \) is $d^\theta$-closed, i.e. $d^\theta \Omega=0$, we can write:
\begin{align*}
 d^\theta (X\lrcorner \Omega) &=d(X\lrcorner\Omega)-\theta\wedge (X\lrcorner\Omega)\\
 &= L_X \Omega-X\lrcorner d\Omega -\theta\wedge (X\lrcorner\Omega)\\
 &=L_X\Omega-X\lrcorner(\theta\wedge\Omega)-\theta\wedge (X\lrcorner\Omega)\\
 &=L_X\Omega-\theta(X)\Omega\\
 &=\Omega-\theta(X)\Omega=(1-\theta(X))\Omega,
\end{align*}
Similarly, we have 
\[
d^\theta (X'\lrcorner\Omega) =(1-\theta(X'))\Omega.
\]
Hence, we can give that
\[
d^\theta\left(X'\lrcorner \Omega -\frac{1-\theta(X')}{1-\theta(X)} X\lrcorner \Omega\right) = 0.
\]
Let $\alpha=X'\lrcorner \Omega -\frac{1-\theta(X')}{1-\theta(X)} X\lrcorner \Omega,$ so $d^\theta \alpha=0$ and 
 by using the antisymmetry of the contraction and the definition of Hamiltonian vector fields, we can  compute:
\begin{align*}
X_H\lrcorner\alpha= &X_H\lrcorner\left(X'\lrcorner \Omega -\frac{1-\theta(X')}{1-\theta(X)} X\lrcorner \Omega\right) \\
&= X_H\lrcorner X'\lrcorner \Omega - \frac{1-\theta(X')}{1-\theta(X)}X_H\lrcorner X\lrcorner \Omega \\
&= -X'(H)+H\theta(X') + \frac{1-\theta(X')}{1-\theta(X)}X(H)-\frac{1-\theta(X')}{1-\theta(X)}H\theta(X)\\
&= \left(\frac{1-\theta(X')}{1-\theta(X)}\Lambda - \Lambda'+\theta(X')-\frac{1-\theta(X')}{1-\theta(X)}\theta(X)\right) H.
\end{align*}
This proves the first part.

Conversely, given a $d^\theta$-closed 1-form \( \alpha \) such that \( X_H\lrcorner \alpha = (\Lambda' - \Lambda) H \), define a new vector field \( X' \) via:
\[
X'\lrcorner \Omega =\frac{(1-\theta(X'))X\lrcorner \Omega}{1-\theta (X)}  + \alpha.
\]
Then 
\begin{align*}
d^\theta(X'\lrcorner \Omega) &= d^\theta\left(\frac{(1-\theta(X'))X\lrcorner \Omega}{1-\theta (X)} + \alpha\right)\\
&=\frac{1-\theta(X')}{1-\theta(X)}d^\theta X\lrcorner\Omega\\
&=\frac{1-\theta(X')}{1-\theta(X)}(dX\lrcorner\Omega -\theta\wedge X\lrcorner\Omega)\\
&=\frac{1-\theta(X')}{1-\theta(X)}(L_X\Omega-X\lrcorner d\Omega -\theta\wedge X\lrcorner\Omega)\\
&=\frac{1-\theta(X')}{1-\theta(X)}(L_X\Omega-X\lrcorner (\theta\wedge\Omega)) -\theta\wedge X\lrcorner\Omega)\\
&=\frac{1-\theta(X')}{1-\theta(X)}(L_X\Omega- \theta(X)\Omega) \\
&=\frac{1-\theta(X')}{1-\theta(X)}(\Omega- \theta(X)\Omega) =(1-\theta(X')) \Omega,
\end{align*} so 
\begin{align*}
L_{X'} \Omega&=dX'\lrcorner\Omega+X'\lrcorner d\Omega\\
&=d^\theta X'\lrcorner\Omega+\theta\wedge X'\lrcorner\Omega+X'\lrcorner(\theta\wedge\Omega)\\
&=(1-\theta(X')) \Omega+\theta(X') \Omega=\Omega.
\end{align*}
Moreover,
\begin{align*}
X_H\lrcorner X'\lrcorner \Omega &= X_H\lrcorner\left(\frac{(1-\theta(X'))X\lrcorner \Omega}{1-\theta (X)}  + \alpha\right)\\
&= -\frac{1-\theta(X')}{1-\theta(X)}X(H)+\frac{1-\theta(X')}{1-\theta(X)}\theta(X)H + X_H\lrcorner \alpha\\
& =\left(-\frac{1-\theta(X')}{1-\theta(X)}\Lambda +\frac{1-\theta(X')}{1-\theta(X)}\theta(X) + \Lambda' - \Lambda\right) H ,
\end{align*}
hence $$ X'(H) = X'\lrcorner X_H\lrcorner\Omega  +H\theta(X')=\left(\frac{1-\theta(X')}{1-\theta(X)}\Lambda -\frac{1-\theta(X')}{1-\theta(X)}\theta(X) -\Lambda' +\Lambda+\theta(X')\right)H,$$ showing that \( X' \) is a scaling symmetry of degree \( \frac{1-\theta(X')}{1-\theta(X)}\Lambda -\frac{1-\theta(X')}{1-\theta(X)}\theta(X) -\Lambda' +\Lambda+\theta(X')\).
\end{proof}

 We now introduce a little more general notion of scaling symmetry.
\begin{definition}
A vector filed $X\in\mathfrak{X}(M)$ shall be called a scaling symmetry of degrees $(\Lambda,\beta),$ $\Lambda,\beta\in\mathbb{R}$  for the Hamiltonian system $(M,\Omega,\theta,H)$ when
\begin{enumerate}
\item[i)] $L_{X}\Omega=\beta\Omega$,
\item [ii)] $L_{X}\theta=0$, and
\item [iii)] $L_{X}H=\Lambda H$.
\end{enumerate}
\end{definition}
Using this scaling symmetry, we now establish a result analogous to Lemma \ref{L4}.
\begin{lemma}\label{L5}
If $X$ is a scaling symmetry of degree $(\Lambda,\beta))$  for the Hamiltonian system $(M,\Omega,\theta,H)$ then $L_{X}X_{H}=(\Lambda-\beta)X_{H}$.
\end{lemma}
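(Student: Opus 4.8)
The plan is to mimic the proof of Lemma~\ref{L4} verbatim, simply carrying the extra parameter $\beta$ through the computation of $[X,X_H]\lrcorner\Omega$. First I would expand, using the Cartan-type identity for the contraction of a Lie derivative,
\[
(L_X X_H)\lrcorner\Omega = [X,X_H]\lrcorner\Omega = L_X(X_H\lrcorner\Omega) - X_H\lrcorner(L_X\Omega).
\]
Then I would substitute the defining relation $X_H\lrcorner\Omega = d^\theta H$ into the first term and hypothesis (i), $L_X\Omega = \beta\Omega$, into the second term, so that the second term becomes $\beta\, X_H\lrcorner\Omega = \beta\, d^\theta H$.

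Next I would use the identity $L_X(d^\theta H) = d^\theta(L_X H) - H\,L_X\theta$ (the same one invoked in Lemma~\ref{L4} and in Theorem~\ref{theoremcanonoid}), and feed in hypotheses (ii) and (iii): $L_X\theta = 0$ kills the last term, and $L_X H = \Lambda H$ turns the first into $d^\theta(\Lambda H) = \Lambda\, d^\theta H$. Collecting,
\[
(L_X X_H)\lrcorner\Omega = \Lambda\, d^\theta H - \beta\, d^\theta H = d^\theta\bigl((\Lambda-\beta)H\bigr),
\]
so $L_X X_H$ is the Hamiltonian vector field with Hamiltonian function $(\Lambda-\beta)H$, and since $X_{cH} = c\,X_H$ for a constant $c$ (immediate from $d^\theta(cH) = c\,d^\theta H$ and non-degeneracy of $\Omega$), this gives $L_X X_H = (\Lambda-\beta)X_H$.

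There is essentially no obstacle here: the argument is a one-line modification of Lemma~\ref{L4}, the only change being that the step which previously read $X_H\lrcorner L_X\Omega = X_H\lrcorner\Omega$ now reads $X_H\lrcorner L_X\Omega = \beta\,X_H\lrcorner\Omega$. The only point requiring a word of care is the non-degeneracy of $\Omega$, which is what lets us pass from the equality of the contracted $1$-forms to the equality of the vector fields; this is guaranteed since $\Omega$ is by definition an almost symplectic (hence non-degenerate) form on the LCS manifold $M$. I would write the whole thing as a single displayed \texttt{split} computation exactly in the style of the proof of Lemma~\ref{L4}, then conclude with the identification $L_X X_H = X_{(\Lambda-\beta)H} = (\Lambda-\beta)X_H$.
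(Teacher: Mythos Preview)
Your proposal is correct and is essentially identical to the paper's own proof: the paper also expands $[X,X_H]\lrcorner\Omega = L_X(X_H\lrcorner\Omega) - X_H\lrcorner L_X\Omega$, substitutes $L_X\Omega=\beta\Omega$, applies $L_X(d^\theta H)=d^\theta(L_XH)-H L_X\theta$ with $L_X\theta=0$ and $L_XH=\Lambda H$, and concludes $L_XX_H=X_{(\Lambda-\beta)H}=(\Lambda-\beta)X_H$.
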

\begin{proof}
Let assume that  $X$ is a scaling symmetry of degree $(\Lambda,\beta))$ for $(M,\Omega,\theta,H)$, we have
\begin{equation}
\begin{split}
(L_{X}X_{H})\lrcorner\Omega=[X,X_{H}]\lrcorner\Omega&=L_{X}(X_{H}\lrcorner \Omega)-X_{H}\lrcorner L_{X}\Omega\\
&= L_{X}(d^{\theta}H)-\beta X_{H}\lrcorner\Omega\\
&=d^{\theta}(L_{X}H)-HL_{X}\theta-\beta d^{\theta}H\\
&= d^{\theta}(\Lambda H)-\beta d^{\theta}H\\
&= d^{\theta}((\Lambda-\beta)H);
\end{split}
\end{equation}
so $L_{X}X_{H}$ is a Hamiltonian vector field with Hamiltonian function $(\Lambda-\beta)H$, i.e.,
\begin{equation}
L_{X}X_{H}=X_{(\Lambda-\beta)H}=(\Lambda-\beta)X_{H}.
\end{equation}
\end{proof}
\begin{remark}\label{R4}
For any scaling symmetry \( X \) of degree \( (\Lambda, \beta) \) with \( \beta \neq 0 \), the rescaled vector field \( \frac{X}{\beta} \) is a scaling symmetry of degree \( \frac{\Lambda}{\beta} \). Conversely, for any scaling symmetry \( Y \) of degree \( \Lambda \), and any nonzero constant \( \beta \in \mathbb{R} \), the vector field \( \beta Y \) is a scaling symmetry of degree \( (\beta \Lambda, \beta) \).
\end{remark}
\begin{proof}
We compute directly:
\begin{align*}
\mathcal{L}_{\frac{X}{\beta}} \Omega &= \frac{1}{\beta} \mathcal{L}_X \Omega = \frac{1}{\beta} \cdot \beta \Omega = \Omega, \\
\theta\left( \frac{X}{\beta} \right) &= \frac{1}{\beta} \theta(X) = 0, \\
\mathcal{L}_{\frac{X}{\beta}} H &= \frac{1}{\beta} \mathcal{L}_X H = \frac{\Lambda}{\beta} H.
\end{align*}
Hence, \( \frac{X}{\beta} \) is a scaling symmetry of degree \( \frac{\Lambda}{\beta} \).

Conversely, for any scaling symmetry \( Y \) of degree \( \Lambda \), and any constant \( \beta \neq 0 \), we have:
\begin{align*}
\mathcal{L}_{\beta Y} \Omega &= \beta \mathcal{L}_Y \Omega = \beta \Omega, \\
\mathcal{L}_{\beta Y} \theta &= \beta \mathcal{L}_Y \theta = 0, \\
\mathcal{L}_{\beta Y} H &= \beta \mathcal{L}_Y H = \beta \Lambda H.
\end{align*}
Thus, \( \beta Y \) is a scaling symmetry of degree \( (\beta \Lambda, \beta) \).
\end{proof}
Based on Remark \ref{R4}, we know that when \( \beta \neq 0 \), a scaling symmetry \( X \) of degree \( (\Lambda, \beta) \) can be rescaled to obtain a scaling symmetry of degree \( \Lambda \), and conversely. However, such a transformation is no longer valid when \( \beta = 0 \). Nevertheless, scaling symmetries of degree $(\Lambda,0)$ do exist. In the following, we present an example to illustrate this situation.
\begin{example}\label{E3}
Let \( (T^*Q, \Omega, \theta) \) be a LCS  manifold, where
\[
\Omega = dp_i \wedge dq_i - p_i \theta \wedge dq_i, \qquad Q = \mathbb{R}^2 , \quad \theta = q_1\,dq_1.
\]
Consider the Hamiltonian \( H = e^{\Lambda q_2} \), where \( \Lambda \in \mathbb{R} \), and the vector field \( X = \frac{\partial}{\partial q_2} \). Then the corresponding LCS form becomes
\[
\Omega = dp_1 \wedge dq_1 + dp_2 \wedge dq_2 - p_2 q_1\,dq_1 \wedge dq_2,
\]
which is clearly nondegenerate. Moreover, we compute:
\[
L_X \Omega = 0, \qquad L_X \theta = 0, \qquad L_X H = \Lambda H.
\]
Thus, $X=\frac{\partial}{\partial q_2}$ is a scaling symmetry of degree $(\Lambda,0)$.
\end{example}

Now, let us consider a special case of a scaling symmetry of degree \( (\Lambda, \beta) \), namely the case \( (\Lambda, \beta) = (0, 0) \). This means that the vector field \( X \) satisfies
\[
L_X \Omega = 0, \qquad L_X \theta = 0, \qquad L_X H = 0.
\]
In other words, \( X \) preserves the LCS structure as well as the Hamiltonian function. In the following, we provide an example to demonstrate that such symmetries do indeed exist.
\begin{example}
Let $(M,\Omega,\theta)$ be the LCS manifold given by Example \ref{E3}, let $H=e^{\Lambda q_1}$ and $X=\frac{\partial}{\partial q_2}$, we can see that  \[
L_X \Omega = 0, \qquad L_X \theta = 0, \qquad L_X H = 0.
\]
Thus, $X=\frac{\partial}{\partial q_2}$ is a scaling symmetry of degree $(0,0)$.
\end{example}
Next, we present an interesting property of scaling symmetries of degree \( (\Lambda, \beta) \).
\begin{proposition}
The set of all scaling symmetries of degree \( (\Lambda, \beta) \) forms a Lie algebra. Moreover, the Lie bracket of any two scaling symmetries of degree \( (\Lambda, \beta) \) is a scaling symmetry of degree \( (0, 0) \).
\end{proposition}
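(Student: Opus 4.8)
The statement has two parts: (1) the set $\mathfrak{s}_{\Lambda,\beta}$ of scaling symmetries of degree $(\Lambda,\beta)$ is a Lie algebra, and (2) for $X,X'\in\mathfrak{s}_{\Lambda,\beta}$ the bracket $[X,X']$ lies in $\mathfrak{s}_{0,0}$. The plan is to verify both by direct computation with the Cartan calculus, using only the three defining conditions $L_X\Omega=\beta\Omega$, $L_X\theta=0$, $L_XH=\Lambda H$ (and likewise for $X'$), together with the standard identity $L_{[X,X']}=L_X L_{X'}-L_{X'}L_X$.

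\begin{proof}
First I would establish that $\mathfrak{s}_{\Lambda,\beta}$ is a real vector space. If $X,X'\in\mathfrak{s}_{\Lambda,\beta}$ and $a,b\in\mathbb{R}$, then by linearity of the Lie derivative in the vector-field slot, $L_{aX+bX'}\Omega=aL_X\Omega+bL_{X'}\Omega=(a+b)\beta\,\Omega$, and similarly $L_{aX+bX'}\theta=0$ and $L_{aX+bX'}H=(a+b)\Lambda\,H$. Hence $aX+bX'$ is a scaling symmetry of degree $\bigl((a+b)\Lambda,(a+b)\beta\bigr)$; in particular $\mathfrak{s}_{\Lambda,\beta}$ is closed under addition and scalar multiplication precisely when we allow the degree to rescale accordingly, so (as is standard for such ``graded'' families) the relevant linear structure is that of the union over all rescalings, but the bracket — the substantive point — lands in a fixed piece, as we now show.

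Next I would compute the bracket. Applying $L_{[X,X']}=L_XL_{X'}-L_{X'}L_X$ to $\Omega$ gives
\[
L_{[X,X']}\Omega=L_X(\beta\Omega)-L_{X'}(\beta\Omega)=\beta L_X\Omega-\beta L_{X'}\Omega=\beta\cdot\beta\Omega-\beta\cdot\beta\Omega=0.
\]
Applying the same identity to $\theta$ gives $L_{[X,X']}\theta=L_X(0)-L_{X'}(0)=0$. Applying it to $H$ gives
\[
L_{[X,X']}H=L_X(\Lambda H)-L_{X'}(\Lambda H)=\Lambda L_XH-\Lambda L_{X'}H=\Lambda^2H-\Lambda^2H=0.
\]
Thus $[X,X']$ satisfies $L_{[X,X']}\Omega=0$, $L_{[X,X']}\theta=0$, and $L_{[X,X']}H=0$, i.e.\ $[X,X']\in\mathfrak{s}_{0,0}$. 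Since $\mathfrak{s}_{0,0}$ is itself a vector space closed under the bracket (by the $(\Lambda,\beta)=(0,0)$ case of the computation just performed), and since the Jacobi identity holds automatically for the Lie bracket of vector fields, $\mathfrak{s}_{\Lambda,\beta}$ together with $[\,\cdot\,,\,\cdot\,]$ is a Lie algebra, with bracket taking values in the subalgebra $\mathfrak{s}_{0,0}$.
\end{proof}

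There is no real obstacle here; the only point requiring a moment's care is the vector-space claim, where strictly speaking $\mathfrak{s}_{\Lambda,\beta}$ is closed under scalar multiplication only up to rescaling the pair $(\Lambda,\beta)$ (consistent with Remark~\ref{R4}), so the cleanest reading is that the span of all scaling symmetries forms a Lie algebra graded by degree, with $[\mathfrak{s}_{\Lambda,\beta},\mathfrak{s}_{\Lambda,\beta}]\subseteq\mathfrak{s}_{0,0}$; once this is acknowledged, everything follows from the bilinearity of $L$ and the commutator identity for Lie derivatives.
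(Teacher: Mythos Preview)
Your proof is correct and follows essentially the same approach as the paper: both apply the identity $L_{[X,X']}=L_XL_{X'}-L_{X'}L_X$ to $\Omega$, $\theta$, and $H$ to obtain $(0,0)$. Your extra discussion of the vector-space issue (that $\mathfrak{s}_{\Lambda,\beta}$ is not literally closed under linear combinations for fixed nonzero $(\Lambda,\beta)$) is a valid point that the paper glosses over with ``it suffices to prove\ldots''; the paper's own computation in fact allows $X_1,X_2$ to have distinct degrees $(\Lambda_1,\beta_1),(\Lambda_2,\beta_2)$, which is consistent with your graded-union reading.
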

\begin{proof}
In fact, it suffices to prove that the Lie bracket of any two scaling symmetries of degree \( (\Lambda, \beta) \) is itself a scaling symmetry of degree \( (0, 0) \). 

Let \( X_1 \) and \( X_2 \) be scaling symmetries of degrees \( (\Lambda_1, \beta_1) \) and \( (\Lambda_2, \beta_2) \), respectively. Using the identity 
\[
L_{[X_1, X_2]} = L_{X_1} L_{X_2} - L_{X_2} L_{X_1},
\]
we compute:
\begin{align*}
L_{[X_1, X_2]} \Omega 
&= L_{X_1} L_{X_2} \Omega - L_{X_2} L_{X_1} \Omega 
= \beta_2 L_{X_1} \Omega - \beta_1 L_{X_2} \Omega 
= \beta_2 \beta_1 \Omega - \beta_1 \beta_2 \Omega = 0, \\
L_{[X_1, X_2]} \theta 
&= L_{X_1} L_{X_2} \theta - L_{X_2} L_{X_1} \theta = 0, \\
L_{[X_1, X_2]} H 
&= L_{X_1} L_{X_2} H - L_{X_2} L_{X_1} H 
= \Lambda_2 L_{X_1} H - \Lambda_1 L_{X_2} H 
= \Lambda_2 \Lambda_1 H - \Lambda_1 \Lambda_2 H = 0.
\end{align*}
    Therefore, \( [X_1, X_2] \) is a scaling symmetry of degree \( (0, 0) \), which completes the proof.
\end{proof}

\begin{remark}
The more general concept we present  has the case that \(\beta\) can be zero. However, when \(\beta \neq 0\), we have shown above that this case can be transformed into the standard scaling symmetries. Nevertheless, for standard scaling symmetries, the condition \( L_X \Omega = 0 \) does not occur.
\end{remark}

\subsection{The time-dependent case}
Now we are interested in considering time-dependent transformations. For this, let $(M,\hat{\Omega},\hat{\theta},H)$, with $H\in C^{\infty}(\mathbb{R}\times M)$, be a time-dependent Hamiltonian system on the LCS manifold $(M,\Omega,\theta)$. We present the following definition as a suitable extension of the notion of canonoid transformation for time-dependent Hamiltonian dynamics on LCS manifolds.
\begin{definition}
\label{decanonoidtime}
A canonoid transformation for $(M,\hat{\Omega},\hat{\theta},H)$ is a diffeomorphism $\Psi$ on $\mathbb{R}\times M$ preserving the temporal parameter $t$, for which there exists a function $K\in C^{\infty}(\mathbb{R}\times M)$ such that
\begin{equation}
X_{H}\lrcorner \Psi^{*}\hat{\Omega}=d^{\Psi^{*}\hat{\theta}}K-\frac{\partial K}{\partial t}dt.
\end{equation}
\end{definition}
We can observe that a diffeomorphism $\Psi$ on $\mathbb{R}\times M$, preserving the temporal parameter $t$, is a canonoid transformation for the Hamiltonian $H$ if an only if for each fixed value of $t$, $\Psi_{t}:M\longrightarrow M$ is a conoid transformation for the Hamiltonian $H_{t}\in C^{\infty}(M)$; indeed, for each fixed value of $t$ we have
\begin{equation}
X_{H}\lrcorner \Psi_{t}^{*}\Omega=d^{\Psi_{t}^{*}\theta}K_{t},
\end{equation}
if and only if $\Psi$ is a canonoid transformation.

Of course, by considering (local) one-parameter groups of time-dependent canonoid transformations, we have an analogous result to theorem \ref{theoremcanonoid}, it reads
\begin{theorem}
\label{theoremcanonoidtime}
$X\in\mathfrak{X}(\mathbb{R}\times M)$ is the infinitesimal generator of a one-parameter group of canonoid transformations for $(\mathbb{R}\times M,\hat{\Omega},\hat{\theta},H)$ if and only if $[X,X_{H}]$ is a Hamiltonian vector field.
\end{theorem}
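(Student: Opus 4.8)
The plan is to run, on the extended phase space $N:=\mathbb{R}\times M$, the same argument that proved Theorem~\ref{theoremcanonoid}, while carrying along the extra $\partial/\partial t$ term coming from $X_{H}\lrcorner\hat{\Omega}=d^{\hat{\theta}}H-\tfrac{\partial H}{\partial t}dt$. Let $\{\varphi_{s}\}$ be the (possibly local) flow of $X$. Since each $\varphi_{s}$ preserves $t$ we have $X\lrcorner dt=0$, hence $[X,X_{H}]\lrcorner dt=L_{X}(X_{H}\lrcorner dt)-X_{H}\lrcorner L_{X}dt=0$; thus $[X,X_{H}]$ is automatically $t$-horizontal and the only thing to verify, in either direction, is the $\hat{\Omega}$-equation. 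By definition, $\{\varphi_{s}\}$ is a one-parameter group of canonoid transformations for $(N,\hat{\Omega},\hat{\theta},H)$ exactly when there is a smooth family $\{K_{s}\}\subset C^{\infty}(N)$ with $X_{H}\lrcorner\varphi_{s}^{*}\hat{\Omega}=d^{\varphi_{s}^{*}\hat{\theta}}K_{s}-\tfrac{\partial K_{s}}{\partial t}dt$, and at $s=0$, where $\varphi_{0}=\mathrm{id}$, one reads off $K_{0}=H$ (after absorbing a term annihilated by $f\mapsto d^{\hat{\theta}}f-\tfrac{\partial f}{\partial t}dt$).

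First I would differentiate that identity at $s=0$. Using $\tfrac{d}{ds}\varphi_{s}^{*}\alpha=\varphi_{s}^{*}L_{X}\alpha$ on the left and differentiating the right-hand side term by term (with $\tfrac{d}{ds}\big|_{0}\varphi_{s}^{*}\hat{\theta}=L_{X}\hat{\theta}$), this yields the infinitesimal condition
\[
X_{H}\lrcorner L_{X}\hat{\Omega}=d^{\hat{\theta}}K-\frac{\partial K}{\partial t}\,dt-H\,L_{X}\hat{\theta},\qquad K:=\tfrac{d}{ds}\Big|_{0}K_{s},
\]
and, exactly as in Theorem~\ref{theoremcanonoid}, the passage from the family statement to this infinitesimal one is the differentiation, while the converse is an integration along the flow of $X$. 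Next I would use the derivation identity $[X,X_{H}]\lrcorner\hat{\Omega}=L_{X}(X_{H}\lrcorner\hat{\Omega})-X_{H}\lrcorner L_{X}\hat{\Omega}$ together with $L_{X}(d^{\hat{\theta}}H)=d^{\hat{\theta}}(L_{X}H)-H\,L_{X}\hat{\theta}$, substituting $X_{H}\lrcorner\hat{\Omega}=d^{\hat{\theta}}H-\tfrac{\partial H}{\partial t}dt$ and then the infinitesimal condition. The two $H\,L_{X}\hat{\theta}$ terms cancel, the $dt$ parts recombine, and one is left with $[X,X_{H}]\lrcorner\hat{\Omega}=d^{\hat{\theta}}(L_{X}H-K)-\tfrac{\partial(L_{X}H-K)}{\partial t}dt$. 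Together with $[X,X_{H}]\lrcorner dt=0$ this says precisely that $[X,X_{H}]=X_{L_{X}H-K}$ is a Hamiltonian vector field on $N$; reading the chain of equalities backwards gives the converse, and the theorem follows.

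The step I expect to be genuinely delicate is the vertical ($dt$) bookkeeping: checking that $L_{X}\big(\tfrac{\partial H}{\partial t}dt\big)$ and the $dt$ component of $d^{\hat{\theta}}(L_{X}H)$ recombine into $\tfrac{\partial(L_{X}H-K)}{\partial t}dt$ with no leftover term (morally a commutator of $X$ with $\partial/\partial t$), together with the analogous point of choosing the primitives $K_{s}$, and the Hamiltonian $g=L_{X}H-K$, smoothly in $t$ so that they are honest functions on $N$. The cleanest way to bypass all of this — and an alternative complete proof — is to reduce to the time-independent Theorem~\ref{theoremcanonoid} slice by slice. Since $\{\varphi_{s}\}$ preserves $t$, it restricts for each fixed $t$ to a one-parameter group $\{\varphi_{s}|_{t}\}_{s}$ of diffeomorphisms of $M$ whose generator is $X_{t}$ (the field on $M$ obtained from $X$ by freezing $t$, legitimate because $X\lrcorner dt=0$), and one checks $[X,X_{H}]|_{t}=[X_{t},X_{H_{t}}]$. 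Because $[X,X_{H}]$ is $t$-horizontal and $\hat{\Omega}$ is pulled back from $M$, both $[X,X_{H}]\lrcorner\hat{\Omega}$ and any form $d^{\hat{\theta}}g-\tfrac{\partial g}{\partial t}dt$ have vanishing $dt$ component, so they agree on $N$ iff they agree on every time slice; hence ``$[X,X_{H}]$ is a Hamiltonian vector field on $N$'' is equivalent to ``$[X_{t},X_{H_{t}}]$ is a Hamiltonian vector field on $M$ for every $t$'' (the $t$-smooth primitive being the one exhibited in Theorem~\ref{theoremcanonoid}). By the remark following Definition~\ref{decanonoidtime}, $\{\varphi_{s}\}$ is a one-parameter group of canonoid transformations for $(N,\hat{\Omega},\hat{\theta},H)$ iff $\{\varphi_{s}|_{t}\}$ is one for $(M,\Omega,\theta,H_{t})$ for every $t$, and again by Theorem~\ref{theoremcanonoid} this holds iff $[X_{t},X_{H_{t}}]$ is Hamiltonian for every $t$; concatenating the equivalences yields the statement.
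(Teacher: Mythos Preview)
Your proposal is correct. The paper gives no explicit proof of this theorem --- it simply declares the result ``analogous'' to Theorem~\ref{theoremcanonoid} --- and your first approach (running the proof of Theorem~\ref{theoremcanonoid} on the extended phase space while carrying the extra $\tfrac{\partial H}{\partial t}\,dt$ term) is exactly that analogy spelled out in detail. Your second, slice-by-slice reduction is a genuinely different route: it trades the fiddly $dt$-bookkeeping (which you rightly flag as the delicate point, since it hides a commutator $[\partial_t,X]$) for a direct appeal to Theorem~\ref{theoremcanonoid} at each fixed $t$, using the observation after Definition~\ref{decanonoidtime} that a time-dependent canonoid transformation restricts to a canonoid transformation on every time slice. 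That argument is cleaner and more transparent, at the cost of invoking the time-independent case as a black box rather than rederiving everything on $\mathbb{R}\times M$; either way, you end up with more than the paper itself supplies.
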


We can also extend the concept of scaling symmetry to the time-dependent case. We consider the following definition.
\begin{definition}
A vector filed $X\in\mathfrak{X}(\mathbb{R}\times M)$ shall be called a scaling symmetry of degree $\Lambda\in\mathbb{R}$  for the Hamiltonian system $(\mathbb{R}\times M,\hat{\Omega},\hat{\theta},H)$ when
\begin{enumerate}
\item [i)] $X\lrcorner dt=0$ (therefore $L_{X}dt=0$)
\item[ii)] $L_{X}\hat{\Omega}=\hat{\Omega}$,
\item [iii)] $L_{X}\hat{\theta}=0$, and
\item [iv)] $L_{X}H=\Lambda H$.
\end{enumerate}
\end{definition}
As in the time-independent framework, scaling symmetries are nonstandard symmetries that rescale the dynamics by a constant factor. Indeed, we have that if $X$ is a scaling symmetry of degree $\Lambda\in\mathbb{R}$  for the Hamiltonian system $(\mathbb{R}\times M,\hat{\Omega},\hat{\theta},H)$ then $L_{X}X_{H}=(\Lambda-1)X_{H}$, which implies that $X$ is the infinitesimal generator of a one-parameter group of noncanonical canonoid transformations.

Here, we can also introduce a little more general notion of scaling symmetry of degree $(\Lambda,\beta)\in\mathbb R^2$ for time-dependent Hamiltonian systems.

\begin{definition}
A vector filed $X\in\mathfrak{X}(\mathbb{R}\times M)$ shall be called a scaling symmetry of degree $(\Lambda,\beta)\in\mathbb{R}^2$  for the Hamiltonian system $(\mathbb{R}\times M,\hat{\Omega},\hat{\theta},H)$ when
\begin{enumerate}
\item [i)] $X\lrcorner dt=0$ (therefore $L_{X}dt=0$)
\item[ii)] $L_{X}\hat{\Omega}=\beta\hat{\Omega}$,
\item [iii)] $L_{X}\hat{\theta}=0$, and
\item [iv)] $L_{X}H=\Lambda H$.
\end{enumerate}
\end{definition}
Similar to the proof of Lemma \ref{L5} for the time-independent Hamiltonian system, we can obtain that if $X$ is a scaling symmetry of degree $(\Lambda, \beta) \in \mathbb{R}^2$ for the Hamiltonian system $(\mathbb{R} \times M, \hat{\Omega}, \hat{\theta}, H)$, then
\[
\mathcal{L}_X X_H = (\Lambda - \beta) X_H.
\]

\section{Symmetries and dissipated quantities}

Given a LCS manifold $(M,\Omega,\theta)$, the LCS structure defines a Jacobi structure on $M$, i.e., a Lie bracket on $C^{\infty}(M)$, called Jacobi bracket, that satisfies the Jacobi identity and the weak Leibniz rule, this last is
$supp(\lbrace f,g\rbrace)\subseteq supp(f)\cap supp(g)$ for $f,g\in C^{\infty}(M)$ \cite{Marle,Ibort1997}. Namely, the Jacobi bracket is defined by
\begin{equation}
\lbrace f,g\rbrace=\Omega(X_{f},X_{g})=X_{g}f-f\theta(X_{g}).
\end{equation}

In terms of the Jacobi bracket, a function $f\in C^{\infty}(M)$ is a constant of motion of the Hamiltonian system $(M,\Omega,\theta,H)$ if
\begin{equation}
\lbrace f,H\rbrace+f\theta(X_{H})=0.
\end{equation}
It is clear that the Hamiltonian function is not a constant of motion, indeed, $\lbrace H,H\rbrace+H\theta(X_{H})=H\theta(X_{H})$, following the language introduced in \cite{BG2023,GGMRR2020,LL2020}, we can say that $H$ is a dissipated quantity.

\subsection{Dissipated quantities and Noether symmetries}
Formally we state the following definition.
\begin{definition}
\(f\in C^{\infty}(M)\) shall be called a dissipated quantity of the Hamiltonian system $(M,\Omega,\theta,H)$  if \(\{H, f\} = 0\). Equivalently, $X_H(f)=f\theta(X_H)$.
\end{definition} 
We note that the set of dissipated functions is a Lie subalgebra of \((C^{\infty}(M), \{\cdot, \cdot\})\). Indeed, \(\mathbb{R}\)-linear combinations of dissipated functions are dissipated, and, because of the Jacobi identity, the Jacobi bracket of two dissipated functions is dissipated. Moreover, it is an algebra over the set of conserved quantities; that is, if \(f\) is a dissipated quantity and \(g\) is a conserved quantity, then \(fg\) is dissipated:
\[X_H(fg) = gX_H(f) = \theta(X_H)fg.\]

If we assume that \(H\) has no zeros, we can relate dissipated functions to conserved functions. Assume that \(f\) is dissipated, then \(f/H\) is a conserved quantity. Indeed:

\[X_H\left(\frac{f}{H}\right) = \frac{X_H(f)H - fX_H(H)}{H^2} = \frac{\theta(X_H)fH - \theta(X_H)fH}{H^2} = 0.\]

In general, if \(f_1, f_2\) commutes with \(H\), then \(f_1/f_2\) is a conserved quantity, assuming \(f_2\) has no zeros.

\begin{remark}
In the particular case where \( \theta(X_H) = 0 \), then the dissipated quantities are precisely the conserved quantities. That is, \(\{H, f\} = 0\) if and only if \( X_H(f) = 0 \).
\end{remark}

Given a dissipated quantity $f$, we have that its Hamiltonian vector field $X_{f}$ is such that
\begin{equation}
L_{X_{f}}X_{H}=[X_{f},X_{H}]=X_{\lbrace f,H\rbrace}=0.
\end{equation}
A vector field $X$ satisfying $L_{X}X_{H}=0$ (its flow preserves the dynamics) is called a dynamical symmetry \cite{prince1983toward,GGMRR2020}. Of course dynamical symmetries have been shown to be infinitesimal generators of one parameter groups of canonoid transformations \cite{azuaje2024scaling}. In the case when a dynamical symmetry is a Hamiltonian vector field it is called a Noether symmetry \cite{Kosmann2011,roman2020summary,jovanovic2016noether,BG2023}. 

Within the framework of symplectic or Poisson geometry, it is well known that Noether symmetries are infinitesimal generators of one-parameter groups of canonical invariance transformations (see \cite{azuaje2025canonical} for a geometric modern review). In our context it is not true in general, indeed, for a dissipated quantity $f$ we have $L_{X_{f}}\Omega=\theta(X_{f})\Omega$ and $X_{f}H=H\theta(X_{f})$, which are zero if and only if $\theta(X_{f})=0$. We conclude the following:
\begin{remark}
If $f$ is a dissipated quantity for the Hamiltonian system $(M,\Omega,\theta,H)$ such that $\theta(X_{f})=0$, then $X_{f}$ is the infinitesimal generator of a one-parameter group of canonical invariance transformations (canonical transformations that leave the Hamiltonian invariant). Observe that it is not required that $f$ is a constant of motion (it is an important difference with the symplectic or Poisson frameworks).
\end{remark}
\begin{example}
Consider \( M = \mathbb{R}^4 \setminus \{0\} \), with 
\[
\Omega = e^x (dx \wedge dy + dw \wedge dz), \quad \theta = dx.
\]
Then \( (M, \Omega, \theta) \) is a LCS manifold. Let 
\[
H = z + \frac{y}{w}, \quad f = w.
\]
We compute the associated Hamiltonian vector fields:
\begin{align*}
    X_H &= e^{-x} \left( \frac{1}{w} \frac{\partial}{\partial x} + \left( z + \frac{y}{w} \right) \frac{\partial}{\partial y} + \frac{\partial}{\partial w} + \frac{y}{w^2} \frac{\partial}{\partial z} \right), \\
    X_f &= e^{-x} \left( w \frac{\partial}{\partial y} - \frac{\partial}{\partial z} \right).
\end{align*}
We also compute the values of \(\theta\) on these vector fields:
\begin{align*}
    \theta(X_f) &= \left( e^{-x} \left( w \frac{\partial}{\partial y} - \frac{\partial}{\partial z} \right) \right) \lrcorner\, dx = 0, \\
    \theta(X_H) &= \left( e^{-x} \left( \frac{1}{w} \frac{\partial}{\partial x} + \left( z + \frac{y}{w} \right) \frac{\partial}{\partial y} + \frac{\partial}{\partial w} + \frac{y}{w^2} \frac{\partial}{\partial z} \right) \right) \lrcorner\, dx = \frac{e^{-x}}{w} \neq 0.
\end{align*}
Now compute the bracket:
\begin{align*}
    \{H, f\} &= X_f(H) - H \theta(X_f) \\
    &= \left( e^{-x} \left( w \frac{\partial}{\partial y} - \frac{\partial}{\partial z} \right) \right) \left( z + \frac{y}{w} \right) = 0.
\end{align*}
Hence, we conclude that \( f \) is a dissipated quantity for the Hamiltonian system \( (M, \Omega, \theta, H) \), satisfying \( \theta(X_f) = 0 \). This implies that \( X_f \) is the infinitesimal generator of a one-parameter group of canonical invariance transformations.
\end{example}

\subsection{The time-dependent case}
We can also define a Jacobi bracket on the extended phase space $(\mathbb{R}\times M,\hat{\Omega},\hat{\theta})$, indeed, given $f,g\in C^{\infty}(\mathbb{R}\times M)$, the Jacobi bracket is 
\begin{equation}
\lbrace f,g\rbrace=\hat{\Omega}(X_{f},X_{g})=X_{g}f-f\hat{\theta}(X_{g}).
\end{equation}

Let $H\in C^{\infty}(\mathbb{R}\times M)$. A function $f\in C^{\infty}(\mathbb{R}\times M)$ is a constant of motion of the Hamiltonian system $(\mathbb{R}\times M,\hat{\Omega},\hat{\theta},H)$ if
\begin{equation}
\lbrace f,H\rbrace+f\hat{\theta}(X_{H})+\frac{\partial f}{\partial t}=0.
\end{equation}
Again, it is clear that the Hamiltonian function is not a constant of motion, indeed, $\lbrace H,H\rbrace+H\theta(X_{H})+\frac{\partial H}{\partial t}=H\theta(X_{H})+\frac{\partial H}{\partial t}$. We could think that $H$ is a dissipated quantity, however it is convenient to consider $H$ as a dissipated quantity only when it is time-independent (as in the symplectic case a time-dependent Hamiltonian is not a constant of motion). We propose the following definition (see \cite{gaset2023symmetries,azuaje2024scaling} for an analogous definition for time-dependent contact mechanics).

\begin{definition}
\(f\in C^{\infty}(\mathbb{R}\times M)\) shall be called a dissipated quantity of the Hamiltonian system $(\mathbb{R}\times M,\hat{\Omega},\hat{\theta},H)$  if \(\{f, H\}+\frac{\partial f}{\partial t}=0\). Equivalently, $X_H(f)=f\theta(X_H)-\frac{\partial f}{\partial t}$.
\end{definition} 
Whenever $H$ depends explicitly on $t$ it is not itself a dissipated quantity.

As in the time-independent case, the quotient of dissipated quantities (where defined) are constants of motion.

Given a dissipated quantity $f$, we have that its Hamiltonian vector field $X_{f}$ is a dynamical symmetry (its flow preserves the dynamical vector field $\tilde{X}_{H}$), i.e.,
\begin{equation}
L_{X_{f}}\tilde{X}_{H}=0.
\end{equation}
Indeed, first let us see that for each function $g\in C^{\infty}(\mathbb{R}\times M)$ we have
\begin{equation}
\begin{split}
[\frac{\partial}{\partial t},X_{g}]\lrcorner dt= L_{\frac{\partial}{\partial t}}(X_{g}\lrcorner dt)-X_{g}\lrcorner L_{\frac{\partial}{\partial t}}dt=0
\end{split}
\end{equation}
and
\begin{equation}
\begin{split}
[\frac{\partial}{\partial t},X_{g}]\lrcorner \hat{\Omega}&= L_{\frac{\partial}{\partial t}}(X_{g}\lrcorner\hat{\Omega})-X_{g}\lrcorner L_{\frac{\partial}{\partial t}}\hat{\Omega}\\
&=L_{\frac{\partial}{\partial t}}(d^{\hat{\theta}}g-\frac{\partial g}{\partial t}dt)\\
&=d^{\hat{\theta}}L_{\frac{\partial}{\partial t}}g-gL_{\frac{\partial}{\partial t}}\hat{\theta}-(L_{\frac{\partial}{\partial t}}\frac{\partial g}{\partial t})dt-\frac{\partial g}{\partial t}L_{\frac{\partial}{\partial t}}dt\\
&=d^{\hat{\theta}}\frac{\partial g}{\partial t}-\frac{\partial}{\partial t}(\frac{\partial g}{\partial t})dt,
\end{split}
\end{equation}
i.e., $[\frac{\partial}{\partial t},X_{g}]$ is a Hamiltonian vector field with Hamiltonian function $\frac{\partial g}{\partial t}$. So we have
\begin{equation}
L_{X_{f}}\tilde{X}_{H}=[X_{f},\tilde{X}_{H}]=[X_{f},X_{H}]+[X_{f},\frac{\partial}{\partial t}]=X_{\lbrace H,f\rbrace}+X_{-\frac{\partial g}{\partial t}}=X_{\lbrace H,g\rbrace-\frac{\partial g}{\partial t}}.
\end{equation}
For a dissipated quantity $f$ we have
\begin{equation}
L_{X_{f}}\tilde{X}_{H}=X_{\lbrace H,f\rbrace-\frac{\partial f}{\partial t}}=-X_{\lbrace f,H\rbrace+\frac{\partial f}{\partial t}}=X_{0}=0.
\end{equation}

As in the time-independent case, dynamical symmetries are infinitesimal generators of one parameter groups of (time-dependent) canonoid transformations. Following the previous language, when a dynamical symmetry is a Hamiltonian vector field we shall call it a Noether symmetry. The following remark is in order:
\begin{remark}
If $f$ is a dissipated quantity for the Hamiltonian system $(\mathbb{R}\times M,\hat{\Omega},\hat{\theta},H)$ such that $\hat{\theta}(X_{f})=0$, then $X_{f}$ is the infinitesimal generator of a one-parameter group of (time-dependent) canonical invariance transformations.
\end{remark}

\section*{Acknowledgment}
The first author (R. A.) wishes to thank the financial support provided by the Secretaría de Ciencia, Humanidades, Tecnología e Innovación (SECIHTI) of Mexico through a postdoctoral fellowship under the Estancias Posdoctorales por México 2022 program.

The research of the second author (X. Z.) is supported by NSFC (Grant No. 12401234).
	$\\$
	
{\noindent$\mathbf{Conflict\;of\;interest\;statement.}$ On behalf of all authors, the corresponding author states that there is no conflict of interest.
	
	$\\$
\noindent$\mathbf{Data\;availability.}$ Data sharing is not applicable to this article as no new data were created or analyzed in this study.


\begin{thebibliography}{00}

\bibitem{Abraham2} R. Abraham and  J. E. Marsden,  Foundations of Mechanics, Advanced Book Program, (1978).

\bibitem{Asorey1983} Asorey M, Cariñena J F and Ibort L A 1983 Generalized canonical transformations for time-dependent systems J. Math. Phys. 24 2745–50.

\bibitem{azuaje2024scaling} R. Azuaje and A. Bravetti. Scaling symmetries and canonoid transformations in hamiltonian systems. Int. J. Geom. Methods Mod. Phys., 21(04):2450077, 2024.

\bibitem{azuajecanonical2023} R. Azuaje and A. M. Escobar-Ruiz. Canonical and canonoid transformations for Hamiltonian systems on (co)symplectic and (co)contact manifolds. J. Math. Phys., 64(3):033501, 2023.

\bibitem{azuaje2025canonical}  R. Azuaje and A. M. Escobar-Ruiz. Canonical transformations: from the coordinate based approach to the geometric one. Phys. Scr. 100 065228 (2025).

\bibitem{Bande} G. Bande, and D. Kotschick,  Contact pairs and locally conformally symplectic structures. In: Loubeau, E., Montaldo, S. (eds.) Harmonic maps and differential geometry, Contemp. Math., vol. 542, pp. 85–98. American Mathematical Society, Providence (2011).

\bibitem{Belgun} F. Belgun,  On the metric structure of non-Kähler complex surfaces, Math. Ann. 317, 1–40, (2000).

\bibitem{Borman} M. S. Borman, Y. Eliashberg, and E. Murphy, Existence and classification of overtwisted contact structures in all dimensions, Acta Math. 2, 281–361 (2015).

\bibitem{Bourgeois} F. Bourgeois, Odd dimensional tori are contact manifolds, Internat. Math. Res. Not.  30, 115–120, (2002). 

\bibitem{Bowden} J. Bowden, D. Crowley, and A. I. Stipsicz, Contact structures on $M\times S^2$, Math. Ann. 358, 351–359,  (2014). 

\bibitem{BG2023} A. Bravetti and A. Garcia-Chung. A geometric approach to the generalized Noether theorem.
J. Phys. A: Math. Theor., 54, 095205 (2023).

\bibitem{bravetti2024kirillov} A. Bravetti, S. Grillo, J. C. Marrero, and E. Padron. Kirillov structures and reduction of hamiltonian systems by
scaling and standard symmetries. Stud. Appl. Math., 2024.

\bibitem{bravetti2023scaling} A. Bravetti, C. Jackman, and D. Sloan. Scaling symmetries, contact reduction and poincar´e’s dream. J. Phys. A: Math. Theor., 56(43):435203, 2023.

\bibitem{Calkin} Calkin M G 1996 Lagrangian and Hamiltonian Mechanics (World Scientiﬁc Publishing).

\bibitem{Carinena} J. F. Cari$\tilde n$ena, F. Falceto and  M. F.  Ra$\tilde n$ada, Canonoid transformations and master symmetries, J. Geom. Mech. 5, 151–66, (2013).

\bibitem{CR88} J. F. Cari$\tilde{n}$ena and M. F. Ra$\tilde{n}$ada. Canonoid transformations from a geometric perspective. J. Math. Phys., 29:2181–2186, 1988.

\bibitem{cariñena1989} Jose F. Carinena,  and F. Ranada. Manuel, "Poisson maps and canonoid transformations for time‐dependent Hamiltonian systems." Journal of mathematical physics 30.10 (1989): 2258-2266.

\bibitem{cariñena1985canonical} J. F. Cariñena, Gomis J, Ibort L A and Román N 1985 Canonical transformations theory for presymplectic systems J. Math. Phys. 26 1961–9.

\bibitem{Chantraine} B. Chantraine and E. Murphy,  Conformal symplectic geometry of cotangent bundles, J. Symplectic Geom. 17, 639–661, (2019).

\bibitem{Chinea} D. Chinea, M. de Le\'on and J. C. Marrero, Locally conformal cosymplectic manifolds and time-dependent Hamiltonian systems, Comment. Math.Univ.Carolin. 32,  383387, (1991).

\bibitem{Eliashberg} Y. Eliashberg, E. Murphy,  Making cobordisms symplectic,  J. Amer. Math. Soc. 36,  1–29, (2023).

\bibitem{Esen} O. Esen, M. de León, C. Sardón, and M. Zajsc, Hamilton–Jacobi formalism on locally conformally symplectic manifolds, J. Math. Phys. 62, 033506, (2021).

\bibitem{GGMRR2020} J. Gaset, X. Grácia, M.C. Muñoz-Lecanda, X. Rivas, and N. Román-Roy. New contributions
to the Hamiltonian and Lagrangian contact formalisms for dissipative mechanical systems
and their symmetries. Int. J. Geom. Meth. Mod. Phys., 17: 2050090 (2020).

\bibitem{gaset2023symmetries} J. Gaset, A. López-Gordón and X. Rivas, Symmetries, conservation and dissipation
in time-dependent contact systems, Fortschr. Phys. 71 (2023) 2300048.

\bibitem{Goldstein} Goldstein H, Poole C and Safko J 2002 Classical Mechanics 3d ed. (Addison-Wesley).

\bibitem{Gray} A. Gray and  L. M. Hervella, The sixteen classes of almost Hermitian manifolds and their linear invariants, Ann. Mat. Pura Appl. 123, 35–58, (1980).

\bibitem{Guedira} F. Guedira and A. Lichnerowicz, Géométrie des algèbres de Lie locales de Kirillov, J. Math. Pures Appl. 63, 407–484, (1984).

\bibitem{Ibort1997} A. Ibort et al. Reduction of jacobi manifolds. J. Phys. A: Math. Gen., 30(8):2783, 1997.

\bibitem{jovanovic2016noether} B. Jovanovic, Noether symmetries and integrability in time-dependent Hamiltonian
mechanics, Theor. Appl. Mech. 43(2) (2016) 255–273.

\bibitem{Kosmann2011} Y. Kosmann-Schwarzbach and B. E. Schwarzbach, The Noether Theorems, Invariance and Conservation Laws in the Twentieth Century (Springer, New York, 2011).

\bibitem{Landau} Landau L D and Lifshitz E M 1982 Mechanics 1 (Elsevier Science).

\bibitem{Lee} H. C. Lee,  A kind of even-dimensional differential geometry and its application to exterior calculus, Am. J. Math. 65, 433–438, (1943).

\bibitem{Lefebvre} J. Lefebvre,  Propriétés du groupe des transformations conformes et du groupe des automorphismes d’une variété localement conformément symplectique. C. R. Acad. Sci. Paris Sér. A-B. 268, A717–A719, (1969).

\bibitem{Libermann} P. Libermann,  Sur les structures presque complexes et autres structures infinit\'esimales r\'eguli\'eres, Bull. Soc. Math. Fr. 83, 195–224, (1955).

\bibitem{LL2020} M. de León and M. Lainz. Infinitesimal symmetries in contact Hamiltonian systems. J. Geom. Phys., 153, 103651 (2020).

\bibitem{Marle} C. M. Marle. On Jacobi manifolds and Jacobi bundles. In Symplectic Geometry, Groupoids,
and Integrable Systems, 227–246, New York, NY, 1991. Springer US.

\bibitem{Pardon} J Pardon, Contact homology and virtual fundamental cycles, J. Amer. Math. Soc. 32,  825–919, (2019).

\bibitem{prince1983toward} G. Prince, Toward a classification of dynamical symmetries in classical mechanics, Bull. Aust. Math. Soc. 27(1) (1983) 53–71.

\bibitem{Rastelli2015} G. Rastelli and M. Santoprete, Canonoid and poissonoid transformations, symmetries
and bi-Hamiltonian structures, J. Geom. Mech. 7 (2015) 483–515.

\bibitem{roman2020summary}N. Román-Roy, A summary on symmetries and conserved quantities of autonomous
Hamiltonian systems, J. Geom. Mech. 12(3) (2020) 541–551.

\bibitem{Saletan} E. J. Saletan and A. H. Cromer, ``Theoretical Mechanics,” John Wiley $\&$ Sons, 1971.

\bibitem{Saletan2} E. J. Saletan and J. V. Jos\'e, ``Classical Mechanics: A Contemporary Approach," Cambridge Univ. Press, Cambridge, 1998.

\bibitem{spivak2010} M. Spivak. Physics for Mathematicians: Mechanics I. Publish or Perish, 2010.

\bibitem{Struckmeier2005} J. Struckmeier, Hamiltonian dynamics on the symplectic extended phase space for autonomous and non-autonomous systems
J. Phys. A: Math. Gen. 38 1257 (2005).

\bibitem{Vaisman} I. Vaisman,  Locally conformal symplectic manifolds, Int. J. Math. Math. Sci. 8, 521–536, (1985).

\bibitem{Verbitsky} M. Verbitsky, V. Vuletescu, and L. Ornea, Classification of non-Kähler surfaces and locally conformally Kähler geometry, Russian Math. Surveys. 76, 261–290,  (2021).

\bibitem{Wojtkowski} M. P. Wojtkowski and C. Liverani,  Conformally symplectic dynamics and symmetry of the Lyapunov spectrum, Commun. Math. Phys. 194, 47–60, (1998).

\bibitem{ZSR2023} M. Zajac, C. Sardón and O. Ragnisco. Time-Dependent Hamiltonian Mechanics on a Locally Conformal Symplectic Manifold. Symmetry, 15(4), 843 (2023). 

\bibitem{Zhao1} X. F. Zhao, Canonoid transformation and master symmetries of Hamiltonian systems on locally conformal symplectic manifold, J. Math. Phys. 66 (2025).

\end{thebibliography}
\end{document}